\theoremstyle{plain}
\newtheorem{theorem}{Theorem}
\newtheorem*{theorem*}{Theorem}
\newtheorem{lemma}{Lemma}
\newtheorem{corollary}{Corollary}
\newtheorem*{corollary*}{Corollary}
\newtheorem{definition}{Definition}
\theoremstyle{definition}
\newtheorem{example}{Example}
\newtheorem*{remark*}{Remark}
\def\R{{\mathbb R}}
\def\Q{{\mathbb Q}}
\def\l{{\lambda}}
\title{Linear recurrent cryptography:\\
golden-like cryptography for higher order linear recurrences}
\author{Sergiy Koshkin* and Daniel Rodriguez\textdagger\\
\\
*Corresponding author\\
Department of Mathematics and Statistics\\
University of Houston-Downtown\\
One Main Street\\
Houston, TX 77002\\
e-mail: koshkins@uhd.edu\\
\\
\textdagger Department of Mathematics and Statistics\\
University of Houston-Downtown\\
One Main Street\\
Houston, TX 77002\\
e-mail: rodriguezd153@gator.uhd.edu}
\date{}
\begin{document}

\maketitle
\begin{abstract}\
We develop matrix cryptography based on linear recurrent sequences of any order that allows securing encryption against brute force and chosen plaintext attacks. In particular, we solve the problem of generalizing error detection and correction algorithms of golden cryptography previously known only for recurrences of a special form. They are based on proving the checking relations (inequalities satisfied by the ciphertext) under the condition that the analog of the golden $Q$-matrix has the strong Perron-Frobenius property. These algorithms are proved to be especially efficient when the characteristic polynomial of the recurrence is a Pisot polynomial. Finally, we outline algorithms for generating recurrences that satisfy our conditions.
 
\bigskip

\textbf{Keywords}: Matrix encryption, golden cryptography, checking relations, error correction, linear recurrence, companion matrix, dominant eigenvalue, strong Perron-Frobenius property, Pisot polynomial
\end{abstract}

\section{Introduction}

Golden cryptography, as originally proposed by Stakhov \cite{St06,St07}, is a type of matrix encryption where the matrices are $2\times2$ and their entries are consecutive terms of the Fibonacci sequence. It has attractive error detection and correction properties, and was applied to creating digital signatures \cite{BAh}, and, with some modifications, to image encryption and scrambling \cite{Mish}. However, it is vulnerable to brute force and chosen plaintext attacks \cite{ReySan,Tah} due to a small number of parameters in the encryption key that are relatively easy to guess or back engineer. 

Some external countermeasures were proposed in  \cite{Moh,Sud}, and 
in \cite{KS17} using second order linear recurrences more general than the Fibonacci was proposed to increase the number of key parameters while preserving the error detection/correction properties. Some special higher order recurrences, such as $p$-Fibonacci \cite{BP09} and Tribonacci \cite{BD14}, have also been used. However, for security purposes it is desirable to develop matrix encryption based on more or less general linear recurrences if the attractive properties of golden cryptography can be preserved. 

In this paper, we accomplish this task. First, we prove that the {\it coding matrices} whose entries are consecutive terms of order $k$ linear recurrent integer sequences are exactly of the form $M_n=\begin{pmatrix}L^{n+k-1}x_0 & \dots & L^nx_0 \end{pmatrix}$, where $L$ is a matrix that naturally generalizes the $Q$-matrix of golden cryptography and $x_0$ is an initial vector (Theorem \ref{RecMat} and Corollary \ref{RtoL}). Simple conditions on $L$ and $x_0$ guarantee that $M_n$ are invertible, as needed for decryption, and have entries of feasible size. In the special case when $L$ is of companion form, $M_n$ are symmetric and their entries are terms of a single recurrent sequence, as in golden cryptography.

A major problem that we had to solve was to generalize the so-called {\it checking relations} of golden cryptography (inequalities satisfied by the ciphertext entries) upon which its error detection and correction algorithms are based. It turned out that for this $L$ must have the {\it strong Perron-Frobenius property} \cite{Nou06}, i.e. have a simple dominant positive eigenvalue with a strictly positive eigenvector. The $Q$-matrix, the unimodular $2\times2$ matrices that arise in \cite{KS17} and other generalizations, and the special higher order matrices of \cite{BD14,BP09} happen to have this property, but they are all very special cases of a general template. By the Perron-Frobenius theory \cite[ch.\,9]{Lanc}, \cite[1.4]{Minc}, primitive matrices with non-negative entries have the strong Perron-Frobenius property, and even matrices with some negative entries can have it as well \cite{Nou06}.

The explicit computations traditionally used to establish the checking relations are exceedingly cumbersome already for the special third order sequences \cite{BD14,BP09}, and become intractable for higher orders. We propose a different approach based on spectral theory, and our choice of $M_n$ ensures that the checking relations take a particularly simple form (Theorem \ref{kDCheckRel}). Moreover, further analysis of the error correction algorithm shows that the range where the checking relations locate the correct value shrinks to a point asymptotically only when $L$ satisfies an additional {\it Pisot condition}: all of its subdominant eigenvalues are confined to the interior of the unit disk (Theorem \ref{RangeShrink}). 

The characteristic polynomials of strong Perron-Frobenius matrices with this property are known as Pisot polynomials \cite{AK08}. They have been extensively studied in number theory and are well understood. In particular, there are explicit formulas and efficient algorithms for generating them \cite{Boyd96,HS21}. Our result also clarifies the role of the unimodularity condition on $L$ imposed in \cite{KS17}, it implies the Pisot condition for $2\times2$ matrices.

From our general perspective, the $Q$-matrix played three different roles in the generation of the coding matrices. It was the initial matrix $M_0$, the generating matrix $L$, and (the transpose of) another generating matrix $R$ defined by $M_n=L^nM_0=M_0R^n$. In general, all three matrices are different, $L$ is not necessarily symmetric, and $M_n$ are not necessarily powers of a single matrix. These novelties require suitable changes that we work out, we also propose a number of methods for generating $L$ and $x_0$ that satisfy all our conditions. 

The paper is organized as follows. In Section \ref{Gold} we review the main features of golden cryptography. It also serves as preliminaries for introducing some standard notation and terminology of cryptography and linear algebra. In Section \ref{RecSym} we consider a special case where $M_n$ are symmetric and $L$ is the companion matrix of a recurrence. In Section \ref{RecGen} this is generalized to coding matrices whose rows are segments of different recurrent sequences that satisfy the same recurrence. In Section \ref{CheckLin} we prove the main results of the paper on the checking relations for general coding matrices. The transition ratio, which generalizes the golden ratio, is introduced in Section \ref{TransRat} and its role in regulating the size of $M_n$ entries is highlighted. Section \ref{ErrCor} develops error detection and correction algorithms based on the checking relations and illustrates their application with examples. Generation of the coding matrices that satisfy the strong Perron-Frobenius and Pisot conditions is discussed in Section \ref{GenEnc}, and also illustrated by examples. Finally, in the last section we summarize the conclusions.

\section{Golden cryptography}\label{Gold}

In this section, we review the main points of golden cryptography \cite{St07} with an eye on the general setting. Recall that the Fibonacci sequence is defined by the linear recurrence $F_{n+2}=F_{n+1}+F_n$ with the initial values $F_0=0$ and $F_1=1$. For the purposes of encryption we assemble the terms of the sequence into the {\it golden matrices}:
\begin{equation}\label{FibMn}
M_n:=\begin{pmatrix}
F_{n+2}  &  F_{n+1}\\
F_{n+1}  &  F_n
\end{pmatrix}.
\end{equation}
The recurrence relation can be written in a first order form in terms of these matrices, namely $M_{n+1}=QM_n$, where $Q:=\begin{pmatrix}
1  &  1\\
1  &  0\end{pmatrix}$ is called the {\it $Q$-matrix}. By induction, we then have $M_n=Q^nM_0$. Note that $Q$ is symmetric, and we also have a happy coincidence:
$$
M_0=\begin{pmatrix}
F_{2}  &  F_{1}\\
F_{1}  &  F_0
\end{pmatrix}
=\begin{pmatrix}
1  &  1\\
1  &  0\end{pmatrix}=Q.
$$
As a result, $M_n=Q^{n+1}$. It is convenient to extend the sequence to the negative values of $n$ while preserving the recurrence $M_{n+1}=QM_n$, this leads to $M_{-n}=Q^{-n}M_0=Q^{-n+1}$.

For the purposes of encryption, the plaintext is digitized (e.g. by using ASCII codes) and split into blocks of four numbers each that are then arranged into plaintext matrices $P=\begin{pmatrix}p_{11}&p_{12}\\p_{21}&p_{22}\end{pmatrix}$. The encryption key is a (large) natural number $n$, and the corresponding ciphertext matrix is computed as $C=PM_n$. Since $M_n$ are all powers of $Q$, which is invertible, so are they, and the decryption is done simply as $P=CM_n^{-1}$. Moreover, 
\begin{equation}\label{FibMn-1}
M_n^{-1}=(Q^nM_0)^{-1}=M_0^{-1}Q^{-n}=M_0^{-1}M_{-n}M_0^{-1}=M_{-n}Q^{-2},
\end{equation}
i.e. the decryption can be performed by computing the Fibonacci sequence backwards and inverting $Q$. We note another happy coincidence that $M_n$ commute with $Q$ because they are its powers. Explicitly, we find 
\begin{equation}\label{FibMn-1Exp}
M_n^{-1} = \left(-1\right)^{n+1}\begin{pmatrix} F_n & -F_{n+1}\\-F_{n+1} & F_{n+2} \end{pmatrix}.
\end{equation}

For error detection and correction we need a deeper property of the coding matrices. It can be derived as follows. Since $P=CM_n^{-1}$ and the entries of $P$ are non-negative integers we get four inequalities for the entries of $C$ using \eqref{FibMn-1Exp}. They can be packaged into two double inequalities for their row ratios, namely 
\begin{equation}\label{FibIneq}
\frac{F_{n+1}}{F_n} \leq \frac{c_{11}}{c_{12}}\,, \frac{c_{21}}{c_{22}} \leq \frac{F_{n+2}}{F_{n+1}}
\end{equation}
for odd $n$. For even $n$ the inequality signs are reversed. These are the {\it checking relations in the inequality form}.
The recipient of the ciphertext can test both checking relations and determine in which row the single error, if any, is located. The two suspect entries can then be estimated by assuming the other one to be correct. 

A well-known property of the Fibonacci sequence, which goes back to Kepler \cite{FV11}, is that $\lim_{n\to\infty}\frac{F_{n+1}}{F_n}=\varphi$, where $\varphi=1.618\dots$ is the golden ratio and the convergence is exponential. Therefore, the bounds in \eqref{FibIneq} get tighter and tighter with increasing $n$, and ultimately shrink to a single value, the golden ratio. Thus, for large $n$ we get the {\it asymptotic form of checking relations}:
\begin{equation}\label{FibCR}
\frac{c_{11}}{c_{12}}\approx\frac{c_{21}}{c_{22}}\approx\varphi.
\end{equation}
\noindent The next examples briefly illustrates the main ideas.
In the traditional golden cryptography the determinant of $P$ is often transmitted as additional checking data to enhance error correction, but we do not discuss it here because its utility diminishes for higher order recurrences, see Section \ref{ErrCor}.

\section{Linear recurrent sequences and symmetric\\ coding matrices}\label{RecSym}

We will now formally extend the encryption scheme of golden cryptography to higher order linear recurrent sequences. In this section, we consider the special case of symmetric coding matrices. They are generated based on a single sequence, and the analogy with golden cryptography is most transparent. However, the methods we develop will carry over to the more general case studied in Section \ref{RecGen}.

Recall that a $k$-th order linear recurrent sequence satisfies
\begin{equation}\label{kLinRec}
X_{n+k} = a_{k-1}X_{n+k-1}+\dots+a_1X_{n+1}+a_0X_n
\end{equation} 
for some coefficients $a_i$ \cite[ch.\,6]{AB}. The {\it characteristic (companion) polynomial of the recurrence} is defined as
\begin{equation}\label{CharPolRec}
f(z):=z^k-a_{k-1}z^{k-1}-\dots-a_1z-a_0.
\end{equation}
It is convenient to collect the initial values into the {\it initial vector} $x_0=\begin{pmatrix}X_{k-1} & \dots & X_{1} & X_0\end{pmatrix}^T$, which, together with $a_i$, determines the entire sequence.
For the purposes of cryptography, we are mostly interested in integer (even positive integer) sequences, so it is natural to assume that $a_i$ and the entries of $x_0$ are integers, although formal considerations in this and the next section do not depend on this. 
\begin{definition}\label{SymCodMn}
Symmetric coding matrices $M_n$ associated with a linear recurrent sequence $X_n$ are defined as 
\begin{equation}\label{SymMn}
M_n:=\begin{pmatrix}
X_{n+2k-2}  & \dots   & X_{n+k-1}\\
\vdots  &  \ddots & \vdots\\
X_{n+k-1}   & \dots  &  X_n
\end{pmatrix}. 
\end{equation} 
$M_0$ will be called the initial matrix.
\end{definition}
\noindent The definition is in direct analogy to the golden matrices \eqref{FibMn}. Note that the columns of $M_n$ are segments of the same recurrent sequence shifted by a single entry from one column to the next. This means that each one is obtained from the one preceding by multiplying the latter by a matrix of special form. 
\begin{definition}\label{SymLtransf}
The {\it left companion matrix} of a linear recurrence \eqref{kLinRec} is defined as
\begin{equation}\label{Ltransf}
L:=\begin{pmatrix}
a_{k-1}  & a_{k-2} & \dots & a_1  &  a_0\\
1  & 0 & \dots & 0  &  0\\
0  & 1 & \dots & 0  &  0\\
\vdots & \vdots &  \ddots  & \vdots & \vdots\\
0  & 0 & \dots & 1  &  0
\end{pmatrix}
\end{equation} 
\end{definition}
\begin{remark*}
This is essentially the companion matrix of the characteristic polynomial \eqref{CharPolRec} of the recurrence, see \cite[7.1]{HK}. Under the more common convention, the companion matrix of a polynomial has the coefficients $a_i$ appear in the last row, and the $1$-s appear above the diagonal rather than below. Our form is the result of following Stakhov's convention to place the sequence terms into vectors in descending order. Accordingly, one can recover the usual companion form by reversing the order of the indices. 
\end{remark*}
\noindent Note that for $k>2$ the matrix $L$ is non-symmetric in non-degenerate cases, and hence cannot be equal to the initial matrix $M_0$, which is. The $Q$-matrix of golden cryptography, therefore, splits into two different ones, $L$ and $M_0$. With the above notation, we can write 
\begin{equation}\label{MnCyc}
M_n:=\begin{pmatrix}L^{n+k-1}x_0 & \dots & L^nx_0\end{pmatrix},
\end{equation}
hence the entire coding sequence is determined by $L$ and $x_0$. In particular, 
\begin{equation}\label{M0Cyc}
M_0=\begin{pmatrix}L^{k-1}x_0 & \dots & Lx_0 & x_0\end{pmatrix}.
\end{equation} 
As in the golden case, $M_n$ satisfy a first order matrix recurrence $M_{n+1}=LM_n$, and $M_n=L^nM_0$.
\begin{remark*}
The authors of \cite{BD14,BP09} take the bare powers $L^n$ rather than $M_n$ as the coding matrices. This choice generalizes a different feature of the golden matrices, that they are powers of the $Q$-matrix. However, then the entries of $M_n$ have more complex expressions in terms of the sequence $X_n$, and removing the dependence on $M_0$ takes away additional parameters that make encryption more secure. 
\end{remark*}
\begin{example}\label{Bonacci}
Two higher order examples considered in the cryptographic literature are $k$-bonacci \cite{BD14} and $p$-Fibonacci \cite{BP09} sequences ($p:=k-1$). In the former all $a_i=1$, i.e. $X_{n+k} = X_{n+k-1}+\dots+X_{n+1}+X_n$, and in the latter all $a_i=0$ except $a_0=a_{k-1}=1$, i.e. $X_{n+k} = X_{n+k-1}+X_n$. In both examples $x_0:=(1,0\dots,0)^T$. Only the cases $k=3$, Tribonacci and $2$-Fibonacci, have been studied in any detail. For the Tribonacci the left transition and initial matrices are, respectively,
\begin{equation*}
L=\begin{pmatrix}
1  & 1 & 1\\
1  & 0 & 0\\
0  & 1 & 0
\end{pmatrix}, \ \ \ \
M_0=\begin{pmatrix}
2  & 1 & 1\\
1  & 1 & 0\\
1  & 0 & 0
\end{pmatrix};
\end{equation*}
and for the $2$-Fibonacci they are
\begin{equation*}
L=\begin{pmatrix}
1  & 0 & 1\\
1  & 0 & 0\\
0  & 1 & 0
\end{pmatrix}, \ \ \ \
M_0=\begin{pmatrix}
1  & 1 & 1\\
1  & 1 & 0\\
1  & 0 & 0
\end{pmatrix}.
\end{equation*}
Another natural recurrence to consider is $X_{n+k} = X_{n+1}+X_n$. For $k=3$ the matrices are:
\begin{equation*}
L=\begin{pmatrix}
0  & 1 & 1\\
1  & 0 & 0\\
0  & 1 & 0
\end{pmatrix}, \ \ \ \
M_0=\begin{pmatrix}
1  & 0 & 1\\
0  & 1 & 0\\
1  & 0 & 0
\end{pmatrix}.
\end{equation*}
This $L$, and its higher dimensional analogs, come up in the Perron-Frobenius theory due to a result of Wielandt \cite{NR19,Sch02}, so we will refer to the recurrence as the order $k$ Wielandt recurrence. Just as with $k$-bonacci and $p$-Fibonacci matrices the entries of its powers grow slowly, making them feasible for encryption. For $k=2$ all three reduce to the Fibonacci recurrence.
\end{example}
As in the golden case, we extend the sequence to the negative values so that $M_{n+1}=LM_n$ continues to hold. This extension exists uniquely if $L$ is invertible, which we must assume anyway. Recall that a vector $x$ is called {\it cyclic} for a $k\times k$ matrix $A$ if $x,Ax,\dots A^{k-1}x$ span the entire space \cite[7.1]{HK}.
\begin{lemma}[{\bf Invertibility of $M_n$}]\label{InvMn}
The coding matrices $M_n$ are invertible if and only if $L$ is invertible and $x_0$ is its cyclic vector. Moreover,
\begin{equation}\label{SymMn-1}
M_n^{-1}=M_0^{-1}M_{-n}M_0^{-1},
\end{equation}
\end{lemma}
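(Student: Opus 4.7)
The plan is to trace everything back to the formula $M_n = L^n M_0$, combined with the column description $M_0 = \begin{pmatrix} L^{k-1}x_0 & \cdots & Lx_0 & x_0\end{pmatrix}$ from \eqref{M0Cyc}. The iff statement will reduce to separately characterizing when $M_0$ is invertible and when powers of $L$ are invertible; the inverse formula will then be a one-line algebraic manipulation.

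First I would show that $M_0$ is invertible if and only if $x_0$ is cyclic for $L$. Since the columns of $M_0$ are precisely $L^{k-1}x_0,\dots,Lx_0,x_0$, invertibility of $M_0$ is equivalent to the linear independence of these $k$ vectors in the $k$-dimensional ambient space, which, since their number equals the dimension, is the same as saying they span the space, i.e. that $x_0$ is cyclic for $L$ in the sense of \cite[7.1]{HK}.

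Next I would handle the ``if'' direction of the main claim. Assuming $L$ is invertible and $x_0$ is cyclic, the identity $M_n = L^n M_0$ (which holds for all $n\in\Z$ by the extension used in the golden case, since $L^{-1}$ exists) writes $M_n$ as a product of two invertible matrices, hence $M_n$ is invertible. For the ``only if'' direction, invertibility of $M_0$ already forces $x_0$ to be cyclic by the first step, while invertibility of $M_0$ and $M_1 = LM_0$ together give $L = M_1 M_0^{-1}$, which is invertible as a product of invertibles.

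Finally, for the inverse formula I would observe that $M_n^{-1} = (L^n M_0)^{-1} = M_0^{-1} L^{-n}$, and similarly $M_{-n} = L^{-n} M_0$, so
\begin{equation*}
M_0^{-1} M_{-n} M_0^{-1} = M_0^{-1} (L^{-n} M_0) M_0^{-1} = M_0^{-1} L^{-n} = M_n^{-1},
\end{equation*}
which is exactly \eqref{SymMn-1}. There is no real obstacle here; the only point requiring care is that the extension of $M_n$ to negative $n$ is well-defined precisely because $L$ is invertible, so the step using $M_{-n} = L^{-n} M_0$ must be invoked only after invertibility of $L$ has been established.
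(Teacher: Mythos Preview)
Your proposal is correct and follows essentially the same route as the paper's proof: reduce to the factorization $M_n=L^nM_0$, identify invertibility of $M_0$ with cyclicity of $x_0$ via its column description \eqref{M0Cyc}, and unwind the computation \eqref{FibMn-1} with $L$ in place of $Q$ to obtain \eqref{SymMn-1}. Your version is in fact a bit more careful than the paper's on the ``only if'' direction (making explicit that $L=M_1M_0^{-1}$) and on when $M_{-n}$ is defined, but the argument is the same.
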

\begin{proof}
Since $M_n=L^nM_0$ both $L$ and $M_0$ must be invertible. But $M_0=\begin{pmatrix}L^{k-1}x_0 & \dots & Lx_0 & x_0\end{pmatrix}$, so it is invertible if and only if $x_0$ is cyclic. The inversion formula is established by the same computation as in \eqref{FibMn-1}, with $Q$ replaced by $L$.
\end{proof}
\noindent Unlike in the golden case, the inversion formula identity cannot be simplified further because $L$ and $M_0$ do not necessarily commute. By the cofactor expansion in the last column, $\det L=(-1)^{k-1}a_0$, so it is invertible if and only if $a_0\neq0$. It is straightforward to check that $x_0:=(1,0\dots,0)^T$, which is typically chosen as the vector of initial values in golden cryptography and its generalizations \cite{BD14,BP09}, is always cyclic for a left companion matrix. But it has plenty more cyclic vectors, and their entries can provide welcome extra parameters for the encryption key.

Encryption and decryption proceed analogously to the golden cryptography. Numerical plaintext is split into blocks of $k^2$ numbers each and arranged into plaintext matrices $P$ (say, row by row, left to right). The encryption key is a triple: $k$ coefficients of the recurrence $a_i$, $k$ entries of the initial vector $x_0$, and a natural number $n$. In total, we have $2k+1$ parameters. The ciphertext matrix is computed as $C=PM_n$, and the decryption is performed by $P=CM_n^{-1}$. 
\begin{example}\label{2FibEncryp}
Consider the $p$-Fibonacci sequence with $p=2$. The recurrence is $X_{n+3} = X_{n+2}+X_n$, and for $x_0:=(1,0\dots,0)^T$ the sequence is 
$$
X_n=0,0,1,1,2,3,4,6,9,13,19,28,41,60,88,129,189,277,406\dots
$$
Forming the coding matrix for $n=15$ according to \eqref{GenMn} we find:
$$
M_{15} = \begin{pmatrix}
X_{19} & X_{18}  & X_{17}\\
X_{18} & X_{17}  & X_{16}\\
X_{17} & X_{16}  & X_{15}
\end{pmatrix}
=\begin{pmatrix} 406 & 277 & 189\\277 & 189 & 129\\189 & 129 & 88\end{pmatrix}\!.
$$

Suppose the word ALGORITHM is digitized using ASCII values $A = 65,\ L = 76,\ G = 71,\ O = 79,\ R = 82,\ I = 73,\ T = 84,\ H = 72,\ M = 77$, so the plaintext matrix is $P = \begin{pmatrix} 65 & 76 & 71\\79 & 82 & 73\\84 & 72 & 77 \end{pmatrix}$. Selecting $n = 15$ we compute 
$$
C=PM_{15} = \begin{pmatrix} 65 & 76 & 71\\79 & 82 & 73\\84 & 72 & 77 \end{pmatrix}\begin{pmatrix} 406 & 277 & 189\\277 & 189 & 129\\189 & 129 & 88 \end{pmatrix} = \begin{pmatrix} 60861 & 41528 & 28337\\68585 & 46798 & 31933\\68601 & 46809 & 31940 \end{pmatrix}
$$ 
To decrypt, we make use of \eqref{SymMn-1}. First, we extend the sequence to the negative indices, $X_{-n}=1,0,-1,1,1,-2,0,3,-2,-3,5,1,-8,4,9...$, and find $M_{-15}= \begin{pmatrix} 5 & 1 & -8\\1 & -8 & 4\\-8 & 4 & 9 \end{pmatrix}$. We know $M_0$ from Example \ref{Bonacci} and compute $M_0^{-1} = \begin{pmatrix} 0 & 0 & 1\\0 & 1 & -1\\1 & -1 & 0 \end{pmatrix}$. This is the only inversion that needs to be performed. Since the entries of $M_0$ are small compared to those of $M_n$ computational effort is saved. Moreover, according to \eqref{SymMn-1}, we now only need to perform two matrix multiplications to find $M_n^{-1}$. Thus, \begin{align*}
&M_{15}^{-1}=\begin{pmatrix} 9 & 5 & -12\\-5 & -7 & 21\\-12 & 21 & -5 \end{pmatrix}, \text{\ and}\\
&P=CM_{15}^{-1} = \begin{pmatrix} 60861 & 41528 & 28337\\68585 & 46798 & 31933\\68601 & 46809 & 31940 \end{pmatrix}\!\!\!\begin{pmatrix} 9 & 5 & -12\\5 & -7 & 21\\-12 & 21 & -5 \end{pmatrix} = \begin{pmatrix} 65 & 76 & 71\\79 & 82 & 73\\84 & 72 & 77 \end{pmatrix}\!\!.
\end{align*}
\end{example}

\section{General coding matrices}\label{RecGen}

The symmetric coding matrices of the previous section are generated by a single linear recurrent sequence. As a result, the left transition matrix $L$ had a special companion form, and its $k^2$ entries were completely determined by merely $k$ parameters $a_i$. We could instead pick an arbitrary matrix $L$ in addition to an arbitrary initial vector $x_0$, produce $M_0$ according to \eqref{M0Cyc}, and then generate the coding matrices $M_n:=L^nM_0$ as before. This increases the number of encryption key parameters from $2k+1$ to $k^2+k+1$. It turns out that such general coding matrices still have a nice characterization in terms of recurrent sequences. We discuss two alternative representations in this section.

\subsection{Row sequences}\label{RowSec}

While the symmetric coding matrices are generated by a single recurrent sequence, general ones involve $k$ different sequences, albeit all satisfying the same recurrence. For second order recurrences golden cryptography was extended to general coding matrices in \cite{KS17}.
\begin{theorem}[{\bf Recurrent coding matrices}]\label{RecMat} 
Let $L$ be a $k\times k$ matrix with the characteristic polynomial $\chi_L(z)=z^k-a_{k-1}z^{k-1}-\dots-a_1z-a_0$ and $x_0$ be a $k$-vector. Set 
\begin{equation}\label{M0Cyc2}
M_0:=\begin{pmatrix}L^{k-1}x_0 & \dots & Lx_0 & x_0\end{pmatrix}
\end{equation} 
and $M_{n+1}=LM_n$. Then 
\begin{equation}\label{GenMn}
M_n=L^nM_0=\begin{pmatrix}
X_{n+k-1}^{(k-1)} & \dots  & X_n^{(k-1)}\\[0.5em]
\vdots  & & \vdots\\
X_{n+k-1}^{(0)} & \dots  & X_n^{(0)}
\end{pmatrix},
\end{equation} 
where $X_n^{(i)}$ are some sequences satisfying the same recurrence 
$$
X_{n+k} = a_kX_{n+k-1}+\dots+a_1X_{n+1}+a_0X_n.
$$
\end{theorem}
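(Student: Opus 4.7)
The plan is to unwind the definition of $M_n$ explicitly to identify its entries as components of vectors of the form $L^m x_0$, and then to extract the recurrence from the Cayley-Hamilton theorem applied to $L$.

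First, I would compute the columns of $M_n = L^n M_0$ directly. By \eqref{M0Cyc2}, the $c$-th column of $M_0$ (counted from the left, $c = 1,\dots,k$) is $L^{k-c}x_0$, so the $c$-th column of $M_n$ is $L^{n+k-c}x_0$. Thus every entry of $M_n$ is obtained by picking a fixed component out of some vector $L^m x_0$ with $m\in\{n,n+1,\dots,n+k-1\}$; in particular, the entries in a given row of $M_n$ come from the same fixed component, just at shifted values of $m$. Defining $X_m^{(i)}$ to be the component of $L^m x_0$ in the position corresponding to the row labeled by $i$ (following Stakhov's descending-index convention, so that $i=k-1$ is the top row and $i=0$ the bottom) gives the claimed display \eqref{GenMn} by direct bookkeeping.

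Second, the recurrence itself follows from Cayley-Hamilton: since $\chi_L$ is the characteristic polynomial of $L$, one has $\chi_L(L)=0$, i.e.
\begin{equation*}
L^k \;=\; a_{k-1} L^{k-1} + a_{k-2} L^{k-2} + \dots + a_1 L + a_0 I.
\end{equation*}
Multiplying both sides on the right by $L^n x_0$ yields
\begin{equation*}
L^{n+k} x_0 \;=\; a_{k-1} L^{n+k-1} x_0 + \dots + a_1 L^{n+1} x_0 + a_0 L^n x_0,
\end{equation*}
and extracting the $i$-th component of both sides gives exactly
\begin{equation*}
X_{n+k}^{(i)} \;=\; a_{k-1} X_{n+k-1}^{(i)} + \dots + a_1 X_{n+1}^{(i)} + a_0 X_n^{(i)},
\end{equation*}
valid for every $i=0,\dots,k-1$ and every $n$.

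The argument is essentially a one-line invocation of Cayley-Hamilton, so I do not anticipate any genuine mathematical obstacle. The only subtlety is notational: one must be careful to reconcile the descending-index convention of the paper (cf.\ the Remark following Definition \ref{SymLtransf}) with the ordering of rows and components, so that the indices $i$ in $X_m^{(i)}$ match the display \eqref{GenMn} exactly. Once the bijection between rows of $M_n$ and component-extraction functionals is fixed, both conclusions (the matricial form and the common recurrence) fall out from the identity $M_n = L^n M_0$ together with $\chi_L(L)=0$.
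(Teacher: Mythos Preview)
Your proposal is correct and follows essentially the same approach as the paper: identify the columns of $M_n$ as $L^{n+k-c}x_0$ so that each row is a fixed component of the vector sequence $L^m x_0$, then apply Cayley--Hamilton to $L$ and read off the recurrence componentwise. The paper's proof is slightly terser but uses exactly this two-step structure.
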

\begin{proof}By induction, \begin{equation}\label{GenMnCol}
M_n:=\begin{pmatrix}L^{n+k-1}x_0 & \dots & L^nx_0 \end{pmatrix},
\end{equation}
so the columns of $M_n$ shift one place to the right in $M_{n+1}$. Therefore, $M_n$ have the form \eqref{GenMn} for some sequences $X_{n}^{(i)}$. By the Cayley–Hamilton theorem, 
$$
\chi_L(L)=L^k-a_{k-1}L^{k-1}-\dots-a_1L-a_0I=0.
$$
Applying both sides to $x_n:=L^nx_0$ and noting that $L^sx_n=x_{n+s}$, we see that the $i$-th entry of $\chi_L(L)x_n$ is 
$X_{n+k}^{(i)}-a_{k-1}X_{n+k-1}^{(i)}-\dots-a_1X_{n+1}^{(i)}-a_0X_{n}^{(i)}$, and all these entries are $0$. In other words, $X_{n}^{(i)}$ satisfy the recurrence for each $i$. 
\end{proof}
\noindent In the symmetric case, we picked a single $X_n$ and chose $X_0^{(i)}=X_i$, so the rows of $M_n$ were shifted segments of the same sequence. 
\begin{definition}\label{RowSeq}
Consider a matrix sequence $M_n$ of the form \eqref{GenMn}. We call the linear recurrent sequences $X_n^{(i)}$ its row sequences.
\end{definition}
\begin{example}\label{L1234}
Take $L = \begin{pmatrix}
1 &  2 \\ 3 & 4 \end{pmatrix}$. Its characteristic polynomial is $\chi_L(z) = z^2 -5z -2$, so $a_0 = 2$ and $a_1 = 5$. If we take the initial vector to be  $x_0 =  \begin{pmatrix}
1 & 0 \end{pmatrix}^T$ then $M_0 = \begin{pmatrix}
1&1\\3&0
\end{pmatrix}$ from \eqref{M0Cyc2}, which gives us the initial values for the recurrent row sequences. Computing them recursively, we find 
\begin{align*}
X_m^{(1)}&=1,1,7,37,199,1069,5743,30853,165751,890461,4783807...\\
X_m^{(0)}&=0,3,15,81,435,2337,12555,67449,362355,1946673,10458075... 
\end{align*}
For $n = 9$ this gives 
$$
M_{9} = \begin{pmatrix}
X_{10}^{(1)} & X_9^{(1)}\\[0.25em]
X_{10}^{(0)} & X_9^{(0)}\\
\end{pmatrix}
= \begin{pmatrix}
4783807 & 890461\\
10458075 & 1946673
\end{pmatrix}\!.
$$ 
The rapid growth of these sequences may cause problems in practice, when large values of $n$ need to be used, both for security purposes and for effective error correction. The underlying reason (see Section \ref{TransRat}) is that the spectral radius of our $L$ is quite large, $\approx5.372$. This has to be addressed when generating feasible $L$, see Section \ref{GenEnc}.
\end{example}

\subsection{Right companion matrices}\label{RiteComp}

We will now derive an alternative representation for $M_n$ that proves useful for both theoretical and practical purposes.
\begin{definition}\label{GenLtransf}
Consider a matrix sequence $M_n$. If $M_{n+1}=LM_n$ we call $L$ its left transition matrix, and if $M_{n+1}=M_nR$ we call $R$ its right transition matrix.
\end{definition}
\noindent Equivalently, $M_n=L^nM_0$ if $L$ is a left transition matrix, and $M_n=M_0R^n$ if $R$ is a right one. Of course, a general matrix sequence does not have either a left or a right transition matrix, and they may not be unique even when it does have them. However, if $M_0$ is invertible then $L$ and $R$, if any, are uniquely determined, and if there is one then there is also the other. Indeed, if $L$ is the left transition matrix then we set $R:=M_0^{-1}LM_0$, and compute by induction 
$$
L^nM_0=L^{n-1}LM_0=L^{n-1}M_0R=\dots=LM_0R^{n-1}=M_0R^n.
$$ 
For a symmetric coding sequence \eqref{SymMn} the left transition matrix coincides with the left companion matrix of the recurrence \eqref{Ltransf}, but in general it is not of the companion form, and is not determined by the recurrence alone. Indeed, it can be an arbitrary matrix with the characteristic polynomial coinciding with the characteristic polynomial of the recurrence. 

However, even for a general coding sequence \eqref{GenMn} the right transition matrix depends on the recurrence only. Indeed, that the row sequences satisfy the same recurrence can be expressed in the matrix form as  $M_{n+1}=M_nR$, or explicitly
\begin{equation*}
\begin{pmatrix}
X_{n+k}^{(k-1)} & \dots  & X_{n+1}^{(k-1)}\\[0.5em]
X_{n+k}^{(k-2)} & \dots  & X_{n+1}^{(k-2)}\\
\vdots  & & \vdots\\
X_{n+k}^{(0)} & \dots  & X_n^{(0)}\\
\end{pmatrix}\!\!=\!\!
\begin{pmatrix}
X_{n+k-1}^{(k-1)} & \dots  & X_n^{(k-1)}\\[0.5em]
X_{n+k-1}^{(k-2)} & \dots  & X_n^{(k-2)}\\
\vdots  & & \vdots\\
X_{n+k-1}^{(0)} & \dots  & X_n^{(0)}\\
\end{pmatrix}\!\!
\begin{pmatrix}
a_{k-1}  & 1 & 0 & \dots & 0\\
a_{k-2}  & 0 & 1 & \dots & 0\\
\vdots & \vdots &  \vdots  & \ddots & \vdots\\
a_1  & 0 & 0 & \dots & 1\\
a_0  & 0 & 0 & \dots & 0  
\end{pmatrix}\!\!.
\end{equation*} 
\begin{definition}\label{GenRtransf}
The {\it right companion matrix} of a linear recurrence \eqref{kLinRec} is defined as the transpose of its left companion matrix:
\begin{equation}\label{Rtransf}
R:=\begin{pmatrix}
a_{k-1}  & 1 & 0 & \dots & 0\\
a_{k-2}  & 0 & 1 & \dots & 0\\
\vdots & \vdots &  \vdots  & \ddots & \vdots\\
a_1  & 0 & 0 & \dots & 1\\
a_0  & 0 & 0 & \dots & 0  
\end{pmatrix}\!.
\end{equation}
\end{definition}
\noindent Thus, while the left transition matrix may be arbitrary, the right transition matrix is always of the right companion form. In the symmetric case $R=L^T$, and again coincides with the $Q$-matrix in the golden case. In general, all three matrices $M_0$, $L$ and $R$ are different. 

Recall that to enable decryption we need $L$ and $M_0$ to be invertible. The latter implies that $x_0$ must be a cyclic vector for $L$. Under this assumption, $R$ can be characterized as the unique matrix that satisfies $LM_0=M_0R$. Moreover, a converse of Theorem \ref{RecMat} holds.
\begin{corollary}\label{RtoL} Invertible matrices of the form \eqref{GenMn}, whose row sequences satisfy the same recurrence, are given by $M_n=L^nM_0$ for some matrix $L$ and $M_0$ of the form \eqref{M0Cyc2}.
\end{corollary}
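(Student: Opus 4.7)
The plan is to construct $L$ and $x_0$ directly from the given data and then verify that $M_0$ has the required cyclic form. I will use the right companion matrix machinery already set up in Section \ref{RiteComp}.

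First, since the row sequences $X_n^{(i)}$ all satisfy the same recurrence \eqref{kLinRec}, the matrix identity $M_{n+1}=M_nR$ displayed right before Definition \ref{GenRtransf} holds with $R$ the right companion matrix \eqref{Rtransf}. Iterating gives $M_n=M_0R^n$. Because $M_0$ is assumed invertible, I can define
\begin{equation*}
L:=M_0RM_0^{-1},
\end{equation*}
so that $LM_0=M_0R$ and consequently $L^nM_0=M_0R^n=M_n$ for every $n\geq 0$. This produces a candidate left transition matrix.

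Next, I set $x_0$ to be the rightmost column of $M_0$ and show by induction that the $(j+1)$-st column from the right of $M_0$ equals $L^jx_0$ for $j=0,1,\dots,k-1$. The base case $j=0$ is the definition of $x_0$. For the induction step, observe from the explicit form \eqref{GenMn} that the rightmost column of $M_{j+1}$ equals the second-to-rightmost column of $M_j$ (each column of $M_{j+1}$ is the next column to the left inside $M_j$, by the index shift $n\mapsto n+1$). Applied repeatedly, the rightmost column of $M_{j}$ is the $(j+1)$-st column from the right of $M_0$. On the other hand, the rightmost column of $M_{j}=L^jM_0$ equals $L^j$ applied to the rightmost column of $M_0$, which is $L^jx_0$. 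Matching these two descriptions completes the induction and yields
\begin{equation*}
M_0=\begin{pmatrix}L^{k-1}x_0 & \dots & Lx_0 & x_0\end{pmatrix},
\end{equation*}
which is exactly \eqref{M0Cyc2}. Invertibility of $M_0$ then forces $x_0$ to be a cyclic vector for $L$, consistent with Lemma \ref{InvMn}.

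There is no real obstacle here beyond careful bookkeeping of column indices; the one subtle point is to choose the correct column of $M_0$ as $x_0$ (the rightmost, not the leftmost) so that the shift structure built into the row-sequence representation aligns with the cyclic structure $x_0,Lx_0,\dots,L^{k-1}x_0$. Once this alignment is made, the conjugation $L=M_0RM_0^{-1}$ packages everything cleanly and, as a bonus, shows that the characteristic polynomial of $L$ is necessarily the characteristic polynomial \eqref{CharPolRec} of the recurrence.
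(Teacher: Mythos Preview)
Your proof is correct and follows essentially the same route as the paper: define $L:=M_0RM_0^{-1}$ from the right companion matrix, then read off the cyclic form of $M_0$ from the shift structure, taking $x_0$ to be the last column. The only cosmetic difference is that the paper extracts the shift directly from the single identity $LM_0=M_0R$ and the companion shape of $R$ (columns $2,\dots,k$ of $M_0R$ are columns $1,\dots,k-1$ of $M_0$), whereas you pass through the sequence $M_j$ and match rightmost columns; the content is identical.
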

\begin{proof}We know that $M_n=M_0R^n$ for $R$ in \eqref{Rtransf}. If we set $L:=M_0RM_0^{-1}$ then we have by induction
$M_n=L^nM_0$. Due to the companion structure of $R$, all columns after the first in $M_0R$ are the columns of $M_0$ shifted one place to the right. Since $LM_0=M_0R$ by construction it follows that the columns of $M_0$ can be obtained by successively multiplying its last column by $L$. Taking the last column as $x_0$, we thereby represented $M_0$ as in \eqref{M0Cyc2}.
\end{proof}
\noindent It follows that we can start with an arbitrary linear recurrence having $a_0\neq0$, and an arbitrary invertible matrix $M_0$, instead of $L$ and $x_0$, to get the same class of coding matrices \eqref{GenMn}. In particular, to generate the coding matrices $M_n$, we only need to generate $k$ recurrent sequences whose terms fill their rows. The entries of $R$ package the recurrence coefficients, and the rows of $M_0$ package the initial values for each.  

Moreover, if we take $M_0=R$ then $LM_0=M_0R$ implies that $L=M_0=R$, and all three matrices coincide. Then the coding matrices are bare powers $M_n=R^{n+1}$, as with the $Q$-matrix. This is similar to the setup adopted in \cite{BD14} for the Tribonacci recurrence, except the authors use the transpose $R^T$ in place of $R$ when taking the powers. This results in the checking relations having a more complicated form.
\begin{remark*} While left companion matrices \eqref{Ltransf} always have cyclic vectors this is not the case for general $L$. Nonetheless, matrices with cyclic vectors are generic. Recall that the minimal polynomial $\mu_L(z)$ of a matrix $L$ is the monic polynomial of the least degree such that $\mu_L(L)=0$. The matrix is called non-derogatory if $\mu_L(z)=\chi_L(z)$, or, equivalently, if its Jordan cells have distinct eigenvalues. It follows from linear algebra that a matrix has a cyclic vector if and only if it is non-derogatory \cite[7.1]{HK}. When this is the case non-cyclic vectors are confined to a subvariety of smaller dimensions, so almost every vector is cyclic. Moreover, if $\chi_L$ is irreducible over some subfield of $\R$ then $L$ is diagonalizable with distinct eigenvalues, and {\it every} non-zero vector with entries from that subfield is cyclic. This is particularly useful to us when the subfield is $\Q$.
\end{remark*}

\begin{example}\label{3x3M0R}
Using a random number generator (between $0$ and $2$) for the entries we generated $L = \begin{pmatrix}
2&1&2\\0&1&2\\2&2&2
\end{pmatrix}$. Its characteristic polynomial is computed to be $z^3 - 5z^2 + 4$, so $a_0 = -4$, $a_1 = 0$, and $a_2 = 5$. Since $a_0\neq0$ this $L$ is invertible. Moreover, the recurrence relation is $X_{n+3} = 5X_{n+2}-4X_n$, and the right companion matrix is $R = \begin{pmatrix}
5&1&0\\0&0&1\\-4&0&0\\
\end{pmatrix}$.

Note that $z^3 - 5z^2 + 4$ reduces over $\Q$ since $1$ is a root. Nonetheless, the standard initial vector $x_0 = \begin{pmatrix}
1\\0\\0
\end{pmatrix}$ is cyclic, and the initial matrix $M_0 = \begin{pmatrix}
8&2&1\\4&0&0\\8&2&0
\end{pmatrix}$ is invertible. Generating the row sequences by the recurrence with the initial values from $M_0$, we find, for example,
$
M_5 = \begin{pmatrix}
19292 & 3996 & 828 \\11300 & 2340 & 484 \\ 21632 & 4480 & 928\\
\end{pmatrix}.
$
\end{example}
Theorem \ref{RecMat} and Corollary \ref{RtoL} mean that $L,x_0,n$ and $M_0,a_i,n$ provide two alternative parametrizations of the encryption key, both with $k^2+k+1$ parameters. In the first representation, $L$ is arbitrary, but $M_0$ is of the special cyclic form \eqref{M0Cyc2}; in the second one, $M_0$ is arbitrary, but $R$ is of the special companion form \eqref{Rtransf}. For the second order recurrences these parametrizations and matrices were introduced in \cite{KS17}. 

\section{Checking relations}\label{CheckLin}

In this section, we turn from formal considerations to spectral conditions needed for the feasibility of encryption and error detection/correction, and prove the main technical results of the paper that generalize the checking relations of golden cryptography. Their inequality form turns out to hold universally when the coding and plaintext matrices have non-negative entries, and nothing needs to be assumed about $M_n$. But the asymptotic form requires much stronger conditions, such as the strong Perron-Frobenius and Pisot conditions for the left transition matrix $L$. Readers interested in applications to cryptography can skip most of this section, and only refer to definitions and the statements of Theorems \ref{kDCheckRel} and \ref{RangeShrink} as needed. 

\subsection{Checking relations: inequality form}

The usual method for deriving the two-sided inequalities \eqref{FibIneq} based on explicitly inverting the coding matrices quickly becomes too cumbersome in higher orders, as can be seen already for the simplest third order sequences in \cite{BD14} and \cite{BP09}. We give instead a very short proof based on an elementary inequality from \cite[2.1]{Minc} for $k$-tuples of real numbers $r_l,q_l$ with $q_l\geq0$:
\begin{equation}\label{MincIneq}
\min_{1\leq l\leq k}\left(\frac{r_l}{q_l}\right)
\leq \frac{\sum_{l=1}^kr_l}{\sum_{l=1}^kq_l}
\leq\max_{1\leq l\leq k}\left(\frac{r_l}{q_l}\right).
\end{equation} 
Here it is customary to assume $q_l>0$, but one can allow $q_l=0$ if $\frac{r_l}{0}$ is interpreted as $\pm\infty$ with the sign matching the sign of $r_l$, and $\frac00$ are simply dropped from the list under $\min$ and $\max$.
\begin{lemma}\label{CRatRange}
Suppose $C=PM$, where $P$ and $M$ have non-negative entries. Then 
\begin{equation}\label{minMmaxM}
\min_{1\leq l\leq k}\left(\frac{m_{lj}}{m_{lj'}}\right)
\leq \frac{c_{ij}}{c_{ij'}}
\leq\max_{1\leq l\leq k}\left(\frac{m_{lj}}{m_{lj'}}\right).
\end{equation} 
In other words, the ratios of same row entries of $C$ are between the minimal and the maximal ratios of same row entries of $M$ taken in the corresponding two columns.
\end{lemma}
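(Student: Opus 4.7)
The plan is to apply the elementary inequality \eqref{MincIneq} directly to the matrix product. First I would write out the entries of $C=PM$ as
\[
c_{ij}=\sum_{l=1}^k p_{il}\,m_{lj},\qquad c_{ij'}=\sum_{l=1}^k p_{il}\,m_{lj'},
\]
and then consider the ratio $c_{ij}/c_{ij'}$ under the assumption that $c_{ij'}\neq 0$ (otherwise the ratio is to be interpreted via the $\pm\infty$ convention introduced just before the lemma). The idea is to put $r_l:=p_{il}m_{lj}$ and $q_l:=p_{il}m_{lj'}$, so that by construction $q_l\geq 0$ because both factors are non-negative, and the hypothesis of \eqref{MincIneq} is met.

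Then I would observe that for each index $l$ with $p_{il}\neq 0$ the factor of $p_{il}$ cancels, giving
\[
\frac{r_l}{q_l}=\frac{p_{il}m_{lj}}{p_{il}m_{lj'}}=\frac{m_{lj}}{m_{lj'}},
\]
while indices $l$ with $p_{il}=0$ contribute $0/0$ and, per the stated convention, are simply dropped from the $\min$ and $\max$. Hence
\[
\min_{l:\,p_{il}\neq 0}\frac{m_{lj}}{m_{lj'}}\leq\frac{c_{ij}}{c_{ij'}}\leq\max_{l:\,p_{il}\neq 0}\frac{m_{lj}}{m_{lj'}},
\]
by \eqref{MincIneq}. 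Finally, to recover the claim as stated, I would note that removing indices can only shrink the range of ratios over which $\min$ and $\max$ are taken, so enlarging back to all $1\leq l\leq k$ can only decrease the $\min$ and increase the $\max$; this yields \eqref{minMmaxM}.

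I do not expect any substantial obstacle: the proof is essentially a one-line application of \eqref{MincIneq} once the correct choice of $r_l$ and $q_l$ is made. The only subtlety worth flagging is the bookkeeping around vanishing entries of $P$ and $M$, which is handled by the convention on $\pm\infty$ and on dropped $0/0$ terms already set up immediately before the lemma. No assumption on the sign or invertibility of $M_n$ is required, which matches the remark in the text that the inequality form of the checking relations holds universally for non-negative $P$ and $M$.
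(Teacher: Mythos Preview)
Your proposal is correct and follows exactly the paper's approach: write $c_{ij}/c_{ij'}$ as a ratio of sums via the matrix product and apply \eqref{MincIneq} with $r_l=p_{il}m_{lj}$ and $q_l=p_{il}m_{lj'}$. You have simply spelled out more carefully the cancellation of $p_{il}$ and the handling of the $0/0$ terms, which the paper leaves implicit.
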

\begin{proof}
By definition of matrix multiplication,
$$
\frac{c_{ij}}{c_{ij'}}
=\frac{\sum_{l=1}^kp_{il}\,m_{lj}}{\sum_{l=1}^kp_{il}\,m_{lj'}}\,.
$$
Applying \eqref{MincIneq} with $r_l=p_{il}\,m_{lj}$ and $q_l=p_{il}\,m_{lj'}$ we get the desired conclusion.
\end{proof}
\noindent If we apply Lemma \ref{CRatRange} to the golden matrices $M=M_n$ from \eqref{FibMn} with $j=1$, $j'=2$ then we immediately get the golden checking inequalities \eqref{FibIneq}. For the general coding matrices \eqref{GenMn} same row entries in $M_n$ are terms of the same row sequences $X_{n}^{(q)}$ with $m^{(n)}_{ij}=X_{n+k-j}^{(k-i)}$.

\subsection{Checking relations: asymptotic form}

Now let us turn to the asymptotic form of the checking relations \eqref{FibCR}. As with the Fibonacci recurrence, we need the ratios of consecutive terms in the row sequences to converge to a common limit that depends only on the recurrence. Recall that  
\begin{equation}\label{GenPMnCol}
M_n=\begin{pmatrix}L^{n+k-1}x_0 & \dots & L^nx_0 \end{pmatrix},
\end{equation}
i.e. consecutive columns of $M_n$ are obtained by applying consecutive powers of $L$ to the initial vector $x_0$. Therefore, the limit ratios between their corresponding entries are determined by
the asymptotic behavior of $L^nx_0$. This behavior can be conveniently studied using the decomposition of $x_0$ into generalized eigenvectors of $L$. 

For notation and terminology from linear algebra used in this and the next subsection see e.g. \cite{AB,Lanc}. Recall that a non-zero vector $x$ is called its eigenvector of a matrix $A$  when it satisfies $Ax = \lambda x$, and $\lambda$ is called the eigenvalue. All vectors annihilated by some power of $A-\l I$ are called generalized eigenvectors of $A$ corresponding to $\l$, and they form the generalized eigenspace of $A$. As a consequence of the canonical Jordan decomposition, any matrix has a basis of generalized eigenvectors.
\begin{definition}\label{domEigenValue}
An eigenvalue $\l$ is called simple when it has multiplicity $1$ in the characteristic polynomial. It is called dominant when $|\mu| < |\lambda|$ for all other eigenvalues $\mu$. 
\end{definition}
\noindent When an eigenvalue is simple the corresponding generalized eigenspace is one dimensional and is spanned by a single eigenvector. And if a matrix has real entries, the case of primary interest to us, then the dominant eigenvalue, if any, must be real. Otherwise, it cannot dominate its complex conjugate, which is also an eigenvalue. 
\begin{definition}[\cite{Nou06}]\label{StrPerFrob}
A matrix $L$ is said to have the strong Perron-Frobenius property if it has a simple positive dominant eigenvalue, and the corresponding eigenvector has strictly positive entries.
\end{definition}
\noindent The $Q$-matrix, and the left transition matrices of the $k$-Bonacci, $p$-Fibonacci, and any order Wielandt recurrences have the strong Perron-Frobenius property, as do all primitive matrices from the Perron-Frobenius theory. It is the condition we will use to derive the asymptotic form of the checking relations.
\begin{lemma}\label{RatLim}
Let $M_n$ be the general recurrent coding matrices \eqref{GenMn}. Assume that they are invertible and the left transition matrix $L$ of $M_n$ has the strong Perron-Frobenius property. Then for the ratios of terms in the row sequences of $M_n$ we have $\frac{X_{n+s}^{(i)}}{X_n^{(i)}}=\tau^s+o(n)$ with exponentially small $o(n)$, where $\tau$ is the dominant eigenvalue. 
\end{lemma}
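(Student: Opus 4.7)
The plan is to reduce everything to the asymptotic behavior of $L^n x_0$ via its spectral decomposition, and then read off the entries of $M_n$. By \eqref{GenMnCol}, the $n$-th column (from the right, so shifted by $j$) of $M_n$ is $L^{n+j}x_0$, and the row sequences are exactly the entry functions of the iterates $L^n x_0$. So it suffices to estimate $L^n x_0$ componentwise.

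Write $x_0 = c\,v + w$, where $v$ is the positive eigenvector associated to the dominant eigenvalue $\tau$, and $w$ lies in the $L$-invariant complement $W$ spanned by generalized eigenvectors belonging to the subdominant eigenvalues. This decomposition exists because $\tau$ is simple (so its generalized eigenspace is one-dimensional and spanned by $v$) and because $\C^k$ is always a direct sum of generalized eigenspaces. The scalar $c$ is nonzero: by Lemma \ref{InvMn}, invertibility of $M_n$ forces $x_0$ to be cyclic for $L$; if $c=0$, then $x_0\in W$, and since $W$ is $L$-invariant the Krylov subspace $\mathrm{span}\{x_0, Lx_0, \dots, L^{k-1}x_0\}$ would be contained in $W$, a subspace of dimension strictly less than $k$, contradicting cyclicity.

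Now estimate $L^n w$ via Jordan form on $L|_W$. If $\rho := \max\{|\mu|:\mu \text{ is an eigenvalue of } L|_W\}$, then $\rho<\tau$ and standard bounds on powers of Jordan blocks give $\|L^n w\|\leq K\,n^{k-1}\rho^n$ for some constant $K$. Consequently,
\begin{equation*}
L^n x_0 = c\,\tau^n\,v + O\bigl(n^{k-1}\rho^n\bigr),
\end{equation*}
and the $j$-th entry is $c\tau^n v_j + O(n^{k-1}\rho^n)$. Since the strong Perron-Frobenius property forces $v_j>0$ for every $j$, each such entry, i.e.\ each $X_n^{(i)}$, is nonzero for all sufficiently large $n$ and has leading behavior $c\tau^n v_j$.

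Taking ratios and writing $\theta := \rho/\tau < 1$,
\begin{equation*}
\frac{X_{n+s}^{(i)}}{X_n^{(i)}}
= \frac{c\tau^{n+s}v_j + O\bigl(n^{k-1}\rho^{n+s}\bigr)}{c\tau^n v_j + O\bigl(n^{k-1}\rho^n\bigr)}
= \tau^s\cdot\frac{1+O\bigl(n^{k-1}\theta^{n+s}\bigr)}{1+O\bigl(n^{k-1}\theta^n\bigr)}
= \tau^s + O\bigl(n^{k-1}\theta^n\bigr),
\end{equation*}
and the polynomial factor $n^{k-1}$ is absorbed into any slightly larger geometric rate $\theta'\in(\theta,1)$, yielding exponentially small error. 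The main technical obstacle is the justification that $c\neq 0$: it hinges on recognizing that for a non-derogatory matrix, a cyclic vector must have nonzero projection onto every generalized eigenspace, which in the dominant (simple) case reduces to $c\neq 0$. Everything else is a direct application of the Jordan decomposition together with the positivity clause of the strong Perron-Frobenius property.
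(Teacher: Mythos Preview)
Your proof is correct and follows essentially the same approach as the paper: decompose $x_0$ along the dominant eigenvector and the complementary $L$-invariant subspace, use cyclicity of $x_0$ (from Lemma~\ref{InvMn}) to rule out a vanishing dominant coefficient, and conclude via the exponential gap between $\tau$ and the subdominant spectrum. Your version is slightly more explicit about the Jordan-block error bound $O(n^{k-1}\rho^n)$ and about absorbing the polynomial factor into the geometric rate, but the structure of the argument is identical.
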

\begin{proof} For simplicity, we sketch the proof for the case when all generalized eigenvectors are eigenvectors, only indicating changes needed for the general case. Let $\l_i$ be the eigenvalues of $L$ (possibly repeating) and $u_i$ be the corresponding eigenvectors, with $\l_1=\tau$ and $u_1=u$ being the dominant eigenvalue and the corresponding eigenvector. In the basis of eigenvectors $x_0=x_0^1\,u_1+\dots+x_0^k\,u_k$ and 
\begin{multline*}
L^nx_0=x_0^1\,\tau^nu+x_0^2\,\lambda_2^nu_2+\dots+x_0^k\,\lambda_k^nu_k\\
=\tau^n\left(x_0^1\,u + \left(\frac{\lambda_2}{\tau}\right)^n x_0^2\,u_2+\dots+ \left(\frac{\lambda_k}{\tau}\right)^nx_0^k\,u_k\right).
\end{multline*}
In general, the second and further terms may have  polynomials in $n$ multiplying the factors $\left(\frac{\lambda_i}{\tau}\right)^n$, but they still decrease exponentially since $\tau$ is dominant. 
Applying this asymptotic to the corresponding entries in $L^{n+s}x_0$ and $L^nx_0$ we find
$$
\frac{X_{n+s}^{(i)}}{X_n^{(i)}}=\frac{(L^{n+s}x_0)^i}{(L^{n}x_0)^i}=\frac{\tau^{n+s}\,x_0^1\,u^i+o(n)}{\tau^{n}\,x_0^1\,u^i+o(n)}=\tau^s+o(n),
$$
assuming that the dominant terms in the numerator and the denominator are non-zero. 

Since $u$ has strictly positive entries $u^i\neq0$, so it remains to show that $x_0^1\neq0$. By contradiction, suppose $x_0^1=0$. Then applying $L^n$ to the generalized eigenvector expansion of $x_0$ we see that $u$ component never appears in the expansions of $L^nx_0$ for any $n$. Therefore, $L^nx_0$ do not span the entire space and $x_0$ is not cyclic. But invertibility of $M_0$ implies that $x_0$ is cyclic by Corollary \ref{InvMn}, contradiction.
\end{proof}
\noindent Note that the proof implies that the entries of $M_n$ have the same sign for large $n$, namely the sign of $x_0^1$, and we can always make them positive by switching the sign of the initial vector if necessary. Combining Lemmas  \ref{CRatRange} and \ref{RatLim} we arrive at the announced main result.
\begin{theorem}\label{kDCheckRel}
Let the ciphertext matrix be $C=PM_n$, where $M_n$ are the general recurrent coding matrices \eqref{GenMn}, and $L$ be the left transition matrix of $M_n$. Assume the following:  

\noindent \textup{(i)} $P$ has non-negative entries; 

\noindent \textup{(i)} $M_n$ are invertible with positive entries; 

\noindent \textup{(iii)} $L$ has the strong Perron-Frobenius property.

Let $m_{ij}^{(n)}$ be the entries of $M_n$ and $\tau$ be the dominant eigenvalue of $L$. Then for the entries $c_{ij}$ of $C$ we have
\begin{equation}\label{minMmaxMn}
\min_{1\leq l\leq k}\left(\frac{m_{lj}^{(n)}}{m_{lj'}^{(n)}}\right)
\leq \frac{c_{ij}}{c_{ij'}}
\leq\max_{1\leq l\leq k}\left(\frac{m_{lj}^{(n)}}{m_{lj'}^{(n)}}\right),
\end{equation}
and both bounds converge to $\tau^{j'-j}$ exponentially fast for all $i,j,j'$ when $n\to\infty$. In particular, 
\begin{equation}\label{cijtauon}
\frac{c_{ij}}{c_{ij'}}=\tau^{j'-j}+o(n)
\end{equation} 
with exponentially small $o(n)$. 
\end{theorem}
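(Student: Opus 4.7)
The plan is to assemble the theorem in three short steps, since the analytical work has already been carried out in Lemmas \ref{CRatRange} and \ref{RatLim}. First I would obtain the inequality \eqref{minMmaxMn} by a direct application of Lemma \ref{CRatRange} to $M = M_n$: hypotheses (i) and (ii) supply exactly the non-negativity that lemma requires, and the bound is immediate.

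Next I would translate same-row entries of $M_n$ into terms of a single row sequence. Reading off \eqref{GenMn} one has $m_{lj}^{(n)} = X_{n+k-j}^{(k-l)}$, so for each fixed row $l$,
\[
\frac{m_{lj}^{(n)}}{m_{lj'}^{(n)}} = \frac{X_{m+s}^{(i)}}{X_m^{(i)}}, \qquad i = k-l,\ \ m = n+k-j',\ \ s = j'-j.
\]
Hypotheses (ii) and (iii) furnish exactly what Lemma \ref{RatLim} needs: invertibility of $M_n$ makes $x_0$ cyclic via Lemma \ref{InvMn}, which is what guarantees that the coefficient $x_0^1$ along the dominant eigenvector is nonzero in the proof of that lemma, while the strong Perron-Frobenius property of $L$ is assumed outright. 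Consequently each such ratio equals $\tau^{j'-j} + o(n)$, with an exponentially small remainder whose rate is governed by the spectral gap $|\lambda_2/\tau| < 1$ of $L$.

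Finally, since this exponential rate depends only on the spectrum of $L$ and not on the row, and the index $l$ ranges over only the $k$ values $1,\dots,k$, both $\min_l$ and $\max_l$ of the row ratios inherit exponential convergence to the common limit $\tau^{j'-j}$, at worst at the slowest of the individual rates. Squeezing $c_{ij}/c_{ij'}$ between these two bounds via \eqref{minMmaxMn} then yields \eqref{cijtauon} with exponentially small error. The only substantive point that must be verified is the uniformity of the exponential rate across rows, but this is automatic: the rate is dictated by the eigenvalues of $L$ alone and does not depend on which row is being considered, so taking $\min$ and $\max$ over finitely many rows cannot destroy it. Apart from this bookkeeping, the theorem is a direct assembly of Lemmas \ref{CRatRange} and \ref{RatLim}.
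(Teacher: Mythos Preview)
Your proposal is correct and follows exactly the paper's approach: the paper's entire proof is the single sentence ``Combining Lemmas \ref{CRatRange} and \ref{RatLim} we arrive at the announced main result,'' and you have simply written out that combination in detail, including the translation $m_{lj}^{(n)} = X_{n+k-j}^{(k-l)}$ (which the paper records just after Lemma \ref{CRatRange}) and the finite-index squeeze. Your remark on uniformity of the exponential rate across the $k$ rows is a welcome elaboration that the paper leaves implicit.
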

\noindent We recover the more traditional checking relations for consecutive entries by taking $j'=j+1$.

\subsection{Pisot condition}\label{PisotCon}

One might hope that for large enough $n$ the ranges for $c_{ij}$ allowed by the checking relations will shrink, eventually leaving a single integer within them, and eliminate the need for trial and error in correction. However, this is not the case in general. True, the distance between the bounds in \eqref{minMmaxMn} does go to $0$ since they both converge to $\tau$. But if we are using $c_{ij'}$ to recover $c_{ij}$, the range of $c_{ij}$'s possible values gets scaled by $c_{ij'}$. And the size of  $c_{ij'}$ grows exponentially with $n$ for a given plaintext, because $C=PM_n$ and the entries of $M_n$ grow exponentially. 
\begin{definition}\label{ChkRng}
The checking range for $c_{ij}$ (relative to $c_{ij'}$) is the interval for its values specified by the checking relations \eqref{minMmaxMn}. 
\end{definition}
To ensure the shrinking of the checking range we need a stronger condition on $L$ than the strong Perron-Frobenius property. It turns out that the range's asymptotic behavior depends on the subdominant eigenvalues of $L$, and it will shrink to a point when they are all located inside the unit disk. This is the Pisot condition \cite{AK08}.
\begin{definition}\label{Pisot}
A polynomial is called Pisot if it has real coefficients, the free term $\geq1$, a simple positive dominant root, and the rest of the roots have absolute values $<1$. We call a linear recurrence or a matrix Pisot if their characteristic polynomial is Pisot.
\end{definition}
\noindent Pisot polynomials with integer coefficients are well studied in number theory and there are explicit formulas and effective algorithms for generating them \cite{Boyd96,HS21}. Second order recurrences with real coefficients, simple positive dominant root $\tau$, and the free term $\pm1$, like the Fibonacci recurrence or unimodular recurrences of \cite{KS17}, are automatically Pisot because the second root is $\pm\frac1\tau$. One can check directly that Tribonacci, $2$-Fibonacci and order $3$ Wielandt recurrences are also Pisot. However, order $4$ Wielandt recurrence is not Pisot, its second largest eigenvalue pair has absolute values $\approx1.06$.
\begin{theorem}\label{RangeShrink} In conditions of Theorem \ref{kDCheckRel} suppose additionally that $L$ is Pisot. Then the checking ranges shrink to a point when $n\to\infty$.
\end{theorem}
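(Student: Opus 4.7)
The plan is to write the checking range width as a product of two competing quantities and show that the Pisot condition forces one to decay faster than the other grows. By Definition \ref{ChkRng}, the length of the checking range for $c_{ij}$ relative to $c_{ij'}$ is
\[
w_n := c_{ij'} \left( \max_{1 \leq l \leq k} \frac{m_{lj}^{(n)}}{m_{lj'}^{(n)}} - \min_{1 \leq l \leq k} \frac{m_{lj}^{(n)}}{m_{lj'}^{(n)}} \right),
\]
so the task reduces to showing $w_n \to 0$ as $n \to \infty$.

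The first step is to refine the asymptotic from the proof of Lemma \ref{RatLim}. Decomposing $x_0$ in a Jordan basis of $L$, and letting $\tau$ be the dominant eigenvalue (with eigenvector $u$ having strictly positive entries) and $\rho$ be the second largest modulus among the remaining eigenvalues, one obtains
\[
L^n x_0 = \tau^n x_0^1 u + E_n, \qquad \|E_n\|_{\infty} = O(n^{d-1} \rho^n),
\]
for some fixed integer $d$ controlled by the Jordan structure at the subdominant eigenvalues. Since $u$ has strictly positive entries, the denominators $m_{lj'}^{(n)}$ are eventually of order $\tau^{n+k-1-j'} x_0^1 u^{(l)} > 0$, so expanding in powers of $(\rho/\tau)^n$ yields
\[
\frac{m_{lj}^{(n)}}{m_{lj'}^{(n)}} = \tau^{j'-j} + (\rho/\tau)^n \alpha_l(n) + O\!\left((\rho/\tau)^{2n}\right),
\]
where each $\alpha_l(n)$ is bounded by a polynomial of degree $d-1$. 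Because the leading term $\tau^{j'-j}$ does not depend on $l$, taking $\max_l$ minus $\min_l$ gives a bound of order $n^{d-1} (\rho/\tau)^n$.

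The second step is to bound $c_{ij'}$ from above. Each entry satisfies $m_{lj'}^{(n)} = O(\tau^n)$ by the same expansion, so $c_{ij'} = \sum_l p_{il} m_{lj'}^{(n)} = O(\tau^n)$, with the constant depending on the plaintext $P$. Multiplying the two estimates,
\[
w_n = O(\tau^n) \cdot O\!\left(n^{d-1} (\rho/\tau)^n\right) = O\!\left(n^{d-1} \rho^n\right),
\]
which tends to $0$ because $\rho < 1$ by the Pisot condition. This proves the claim.

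The main obstacle is the bookkeeping when the subdominant spectrum of $L$ includes a complex conjugate pair or a nontrivial Jordan block at modulus $\rho$: in those cases the simple $\lambda_2^n$ appearing in the sketch of Lemma \ref{RatLim} must be replaced by a finite sum of terms of the form $p(n)\mu^n$ with $|\mu|=\rho$, so that $\alpha_l(n)$ is not a constant but an oscillating-polynomial quantity. A uniform polynomial bound on $|\alpha_l(n)|$ still follows from the Jordan form, which is all that is needed. The decisive cancellation---the $\tau^n$ in $c_{ij'}$ against the $\tau^{-n}$ in the spread of ratios---then leaves the factor $\rho^n$, whose decay is precisely what the Pisot condition is designed to guarantee.
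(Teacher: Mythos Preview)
Your proof is correct and follows essentially the same approach as the paper: factor the checking-range width as $c_{ij'}$ times the spread of the column ratios, bound the spread by a polynomial times $(\sigma/\tau)^n$ via the Jordan expansion of $L^n x_0$, bound $c_{ij'}$ by a constant times $\tau^n$, and multiply to get a polynomial times $\sigma^n\to 0$ under the Pisot condition. The paper's write-up is slightly more compact in that it computes the difference of two ratios directly (equation \eqref{RatDiff}) rather than expanding each ratio and subtracting, but the substance is identical.
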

\begin{proof} Let $\sigma$ denote the second largest eigenmodulus of $L$. Following the notation of Lemma \ref{RatLim} and keeping only the leading terms of the asymptotic, we compute
\begin{align}\label{RatDiff}
\frac{X_{n+s}^{(i)}}{X_n^{(i)}}-\frac{X_{n+s}^{(j)}}{X_n^{(j)}}
=\frac{(L^{n+s}x_0)^i}{(L^{n}x_0)^i}-\frac{(L^{n+s}x_0)^j}{(L^{n}x_0)^j}
=\frac{x_0^1\sum\limits_{|\lambda_l|=\sigma}Q_{i,j}(n)\left(\frac{\lambda_l}{\tau}\right)^n+\text{LOT}}{(x_0^1)^2\,u^iu^j+\text{LOT}}, 
\end{align}
where  $Q_{i,j}(n)$ are polynomials of degree at most $k-2$, and LOT are exponentially smaller terms. Note that in conditions of Theorem \ref{kDCheckRel} $(x_0^1)^2\,u^iu^j\neq0$. Therefore,
$$
\left|\frac{X_{n+s}^{(i)}}{X_n^{(i)}}-\frac{X_{n+s}^{(j)}}{X_n^{(j)}}\right|
\leq Q(n)\,\left(\frac{\sigma}{\tau}\right)^n,
$$
where the polynomial $Q(n)$ absorbs all the constants. In particular, this estimate holds for the difference between the bounds in \eqref{minMmaxMn}. Since $c_{ij'}$ is a positive linear combination of $X_m^{(i)}$ with $n\leq m\leq n+k-1$ by \eqref{GenMn}, it is bounded by $K\tau^n$ for some $K>0$. Hence the length of the checking range is bounded by $KQ(n)\,\sigma^n$. Under the Pisot condition, $\sigma<1$ and it goes to $0$ when $n\to\infty$.
\end{proof}

\section{Transition ratio and the size of entries}\label{TransRat} 

In view of the checking relations and by analogy to the golden ratio, it is convenient to give the dominant eigenvalue of the left transition matrix a name.
\begin{definition}\label{TransfRat}
Let $L$ be the left transition matrix of a coding sequence $M_n$ with a simple positive dominant eigenvalue $\tau$. We call $\tau$ the transition ratio of $M_n$.
\end{definition}
\noindent The golden ratio $\varphi$ is exactly the dominant eigenvalue of the $Q$-matrix and the transition ratio of the golden matrices $M_n=Q^{n+1}$. In general, when $M_0$ is invertible $L$ and $R$ are intertwined by it and hence similar, so the transition ratio can also be characterized as the dominant eigenvalue of $R$. In particular, like $R$, the transition ratio is determined by the recurrence \eqref{kLinRec} alone. 

Of course, $\tau$ is nothing other than the spectral radius of $L$, which is defined for any matrix. But its relation to the size of entries of $M_n$ is stronger than for arbitrary matrices where the growth of some entries may have no effect on the spectral radius, as in triangular matrices. Since the rows of $M_n$ are segments of recurrent sequences with the limit ratio $\tau$ they asymptotically grow as $K\tau^n$ for some constant $K$. This means that practical considerations require $\tau$ not to be too large if large $n$ are to still be feasible for encryption, see Example \ref{L1234}. Note that for invertible integer valued matrices we always have $\tau\geq1$ because the determinant is at least $1$ by absolute value. In examples appearing in the literature one typically has $\tau<2$. 

We can also give a more straightforward characterization of the transition ratio analogous to the usual definition of the golden ratio, which is convenient for estimating its numerical values.
\begin{definition}\label{StandRec}
Given a linear recurrence \eqref{kLinRec} define its standard sequence $S_n$ as the one satisfying it with the initial values $S_{k-1}=1$, $S_{k-2}=\dots=S_0=0$.
\end{definition}
\noindent For the Fibonacci recurrence the standard sequence will be exactly the sequence of Fibonacci numbers. The authors of \cite{BD14,BP09} use the standard sequences of the Tribonacci and $2$-Fibonacci recurrences, respectively.
\begin{corollary}\label{StandRat} Suppose the left companion matrix \eqref{Ltransf} of a linear $k$-th order recurrence is invertible and has a simple dominant eigenvalue $\tau$, whose eigenvector has strictly positive entries. Let $S_n$ be its standard sequence, then 
\begin{equation}\label{LimStand}
\tau=\lim_{n\to\infty}\frac{S_{n+1}}{S_n},
\end{equation}
and the convergence is exponentially fast. 
\end{corollary}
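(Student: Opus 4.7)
The plan is to reduce the claim to Lemma \ref{RatLim} by recognizing $S_n$ as the bottom row sequence of the symmetric coding matrix associated with $L$ and the initial vector $x_0 := (1, 0, \ldots, 0)^T$.

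First I would observe that $x_0$ packages exactly the initial values of $S_n$, and that it is cyclic for every left companion matrix (by the remark following Lemma \ref{InvMn}). Together with the assumed invertibility of $L$, Lemma \ref{InvMn} then delivers the invertibility of the resulting symmetric coding matrices $M_n$. The remaining hypotheses --- simple positive dominant eigenvalue with strictly positive eigenvector --- are precisely the strong Perron-Frobenius property of Definition \ref{StrPerFrob}. So Lemma \ref{RatLim} applies.

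A short induction then shows that $L^n x_0 = (S_{n+k-1}, S_{n+k-2}, \ldots, S_n)^T$: the top entry of $Lx_0$ equals $a_{k-1}S_{k-1} + \cdots + a_0 S_0 = S_k$ by the recurrence, while the lower entries are produced by the one-slot shift built into the companion form of $L$. In particular the bottom entry of $L^n x_0$ is $S_n$, so the bottom row sequence of $M_n$ in the sense of Theorem \ref{RecMat} is $X_n^{(0)} = S_n$. Applying Lemma \ref{RatLim} with $s = 1$ to this row gives $S_{n+1}/S_n = \tau + o(n)$ with exponentially small error, which is exactly \eqref{LimStand} together with its claimed convergence rate.

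Every step is essentially immediate once the identification is in place; the only delicate point is the bookkeeping that aligns the standard-sequence indexing of Definition \ref{StandRec} with the row-sequence indexing of Theorem \ref{RecMat}, and this is handled by the induction above. There is no substantive analytic obstacle beyond what is already packaged into Lemma \ref{RatLim}.
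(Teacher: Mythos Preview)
Your proposal is correct and follows essentially the same route as the paper: take $x_0=(1,0,\dots,0)^T$, note that it is cyclic for any invertible left companion matrix so that the $M_n$ are invertible, and then invoke Lemma~\ref{RatLim}. The paper's proof is terser---it simply observes that $M_n$ is of the form \eqref{SymMn} with $S_m$ in place of $X_m$---whereas you spell out the induction identifying $L^n x_0=(S_{n+k-1},\dots,S_n)^T$; this extra bookkeeping is harmless and arguably clearer.
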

\begin{proof}We apply Lemma \ref{RatLim} with $M_0$ of the form \eqref{M0Cyc}. For the standard sequence, $x_0=\begin{pmatrix}1 & 0 & \dots & 0\end{pmatrix}^T$ and is cyclic for any invertible companion matrix $L$. Hence $M_0$ and all $M_n=L^nM_0$ are invertible, so all conditions of the lemma are met. In this case, $M_n$ are of the form \eqref{SymMn} with $S_m$ in place of $X_m$. Thus, $\frac{S_{n+1}}{S_n}=\tau+o(n)$ with exponentially small $o(n)$. 
\end{proof}
\noindent This generalizes the property of the Fibonacci numbers and the golden ratio observed already by Kepler. The first result of this sort for general linear recurrences is due to Poincare \cite{FV11}. The same property holds for any recurrent sequence with a cyclic initial vector, not just the standard one. However, without some non-degeneracy conditions the limit ratio may not exist, and may not coincide with the transition ratio even when it does exist \cite{FV11}. This happens because the initial vector $x_0$ may have zero projection on the dominant eigenvector, and the asymptotic behavior is then determined by subdominant eigenvalues and (generalized) eigenvectors. This is ruled out for the standard sequence, and for the row sequences from Lemma \ref{RatLim} generally, because $x_0$ is cyclic, and hence has a non-zero projection on every eigenvector.

\begin{example}\label{TribNonSt}
Using the Tribonacci recurrence $X_{n+3} = X_{n+2} + X_{n+1}+ X_n$ we have the standard sequence using the initial values $S_2 = 1$, $S_1 = 0$, and $S_0 = 0$. We can estimate the transition ratio by taking ratios of consecutive terms in the standard sequence. For example, $n = 20$ gives $\tau\approx\frac{S_{n+1}}{S_n} = \frac{66012}{35890} \approx 1.8393$. This can be used as the initial guess for a non-linear numerical solver to get a more precise value. Solving $z^3=z^2+z+1$ in Maple we find $\tau\approx 1.839286755...$
\end{example}

\section{Error detection and correction}\label{ErrCor}

In this section, we illustrate error detection and correction algorithms for cryptography with higher order recurrences. By the checking relations, for large $n$ all consecutive entries in a row of the ciphertext matrix have ratios close to $\tau$. Hence, if we know even a single correct entry in a row we should be able, in principle, to recover the entire row. Subject to two caveats. First, as we saw already in golden cryptography, the bounds in the checking relations may not be tight enough to pick a single value. Either one has to test all candidate values by trial and error, or transmit additional check data (traditionally, the determinant) to determine them.

Second, to work from a single entry in a row we must know {\it which} entry is correctly transmitted. If we do not know that by other means we need at least two entries in each row to be correct. Then we can detect the correct pair from the fact that their ratio is close to the appropriate power of the transition ratio (determined by how far they are separated in the row). Once this is done, the remaining row entries can be recovered as above. In the second order case, having two correct entries in each row meant no errors at all, and the determinant had to be used to make up for the missing information when they did occur. In the higher order case there is enough built-in redundancy to make transmission of additional check data less appealing. The examples below illustrate some typical situations, when no extra data is transmitted.
\begin{example}[{\bf Single error, unknown location}]\label{3x3Single} Consider the ciphertext matrix from Example \ref{2FibEncryp} transmitted with a single error in the top row. The received matrix is $C^{*} = \begin{pmatrix}
60861 & 41528 & 28373\\
68585 & 46798 & 31933\\
68601 & 46809 & 31940 
\end{pmatrix}$. The transition ratio can be estimated as in Example \ref{TribNonSt} and is $\tau \approx 1.465571232$. Calculation shows that $\frac{c_{12}^*}{c_{13}^*} = \frac{41528}{28373}$ is the only consecutive ratio that is not approximately $\tau$. Since $\frac{c_{11}^*}{c_{12}^*}$ is correct the error has to be in $c_{13}^*$. Taking the ratios of the second and third column entries of $M_{15}=\begin{pmatrix} 406 & 277 & 189\\277 & 189 & 129\\189 & 129 & 88 \end{pmatrix}$ we have by the checking relations \eqref{minMmaxMn}:
$$
1.465\approx\frac{189}{129}\leq\frac{c_{12}}{c_{13}}\leq\frac{129}{88}\approx1.466,
$$
which gives 
$$
28329\approx41528\cdot\frac{88}{129}\leq c_{13}\leq41528\cdot\frac{129}{189}\approx28345.
$$
The checking range has $16$ possible values for $c_{13}$ that can be tried and the one producing a coherent text upon decryption is selected. However, the estimate obtained using the transition ratio, i.e. $41528\cdot\tau\approx28335.7$ is much closer to the correct value $28337$ than the checking bounds, and this is typical. Therefore, a spiral search algorithm centered around this estimate will be faster, and will find the correct value after checking just three candidates. Moreover, the $2$-Fibonacci sequence used here is Pisot, so the checking range shrinks to a point with $n\to\infty$. For $n\geq29$ it contains a single integer. The correct value of $c_{13}$ (which becomes $6,736,252$ when $n=29$ is used for encryption) can then be recovered uniquely.
\end{example}
\noindent There is a tradeoff between having smaller ciphertexts (small $n$) and being able to correct without trial and error (large $n$). Larger $n$ means greater redundancy in the ciphertext, so one expects to have an easier time with correction. However, as we know from Section \ref{PisotCon}, extra redundancy can be reliably leveraged only when $L$ is Pisot.
\begin{example}[{\bf Checking ranges}] Let $\sigma$ denote the second largest eigenmodulus of $L$ used in the proof of Theorem \ref{RangeShrink}. Tribonacci, $2$-Fibonacci and order $3$ Wielandt recurrences have $\sigma\approx0.74,\,0.83$ and $0.87$, respectively. The smallest $n$ for which the checking ranges are $<1$ in length are $19$, $29$, and $43$, respectively. As expected, the smallest $n$ increases with $\sigma$. For many other Pisot recurrences the value of $\sigma$ is even closer to $1$, so the smallest such $n$ will be larger. For the Tribonacci recurrence the dependence of the checking range $[a,b]$ on $n$ is shown in Table \ref{TribCheckRange}. As one can see, its length is not quite monotone decreasing with $n$.
\begin{table}[!ht]
\centering
\begin{tabular}{|c|c|c|c|c|}
\hline
 $n$   & $12$ & $16$ & $17$ & $19$\\  \hline
$a$   & $135,256.10$ &  $1,547,924.82$ & $2,847,074.91$ & $9,631,589.62$ \\ \hline
$b$  & $135,245.10$ & $1,547,923.40$ & $2,847,077.51$ & $9,631,588.74$ \\ \hline
$b-a$  & $11.00$ & $1.42$ & $2.61$ & $0.87$ \\ \hline
\end{tabular}
\caption{\label{TribCheckRange} Checking ranges for the Tribonacci recurrence at various $n$.}
\end{table}
\end{example}
\begin{example}[{\bf Double error, unknown location}] Suppose the word EXTRATERRESTRIAL is encrypted using symmetric coding matrices of the Tetranacci recurrence $X_{n+4} = X_{n+3}+X_{n+2}+X_{n+1}+X_n$ with the standard initial vector and $n=5$. Two entries in the top row of the ciphertext matrix are transmitted with errors, and the transmitted matrix is 
$$
C^{*} = \begin{pmatrix}
16460 & 8332 & 4123 & 2239\\
14955 & 7767 & 4025 & 2087\\
16387 & 8510 & 4413 & 2282\\
15969 & 8292 & 4297 & 2226\\
\end{pmatrix}\!.
$$
The transition ratio of the Tetranacci sequence is $\tau\approx1.927562$, and the second eigenmodulus is $\sigma\approx0.8182726$. The consecutive ratios in the top row are $\frac{c_{11}}{c_{12}}\approx1.9755$, $\frac{c_{12}}{c_{13}}\approx2.0208$, $\frac{c_{13}}{c_{14}}\approx1.8439$, and they are all incorrect. The square of the transition ratio is $\tau^2\approx3.715495$, and $\frac{c_{11}}{c_{13}}\approx3.9922$, $\frac{c_{12}}{c_{14}}\approx3.7262$. The last ratio is close to the limit value (correct and incorrect ratios are more clearly separated for larger $n$), so we conclude that the errors are in $c_{11}$ and $c_{13}$ entries.

To recover the correct values we use the same approach as in Example \ref{3x3Single}. The coding matrix is
$$
M_5 = \begin{pmatrix}
108 & 56 & 29 & 15\\
56 & 29 & 15 & 8\\
29 & 15 & 8 & 4\\
15 & 8 & 4 & 2\\
\end{pmatrix}\!.
$$
For $c_{11}$ the reference value is $c_{12}=8332$, and the corresponding ranges are 
$$
1.875\approx\frac{15}{8}\leq\frac{c_{11}}{c_{12}}\leq\frac{29}{15}\approx1.9333,
$$
and 
$$
15623\approx8332\cdot\frac{15}{8}\leq c_{11}\leq8332\cdot\frac{29}{15}\approx16108.
$$
The case of $c_{13}$ is analogous with the reference value $c_{14}=2239$. The starting value for the spiral search should again be $c_{12}\cdot\tau\approx 16060.4$, which is closer to the correct value $16046$ than the checking bounds. About thirty candidate values would have to be tested, and, in this case, it has to be done in combination with testing candidates for $c_{13}$. Picking a larger $n$ would narrow down the checking ranges and make the correction easier. 
\end{example}
The correction algorithm is somewhat modified when the error location is known, and, as already mentioned, more errors in the same row can be corrected. In fact, one just needs to have a single correct entry in each row. The next example also illustrates that, and the use of non-consecutive ratios for error correction.
\begin{example}[{\bf Triple error, known location}] As in the previous example, we encrypted EXTRATERRESTRIAL with the symmetric coding matrices of the Tetranacci sequence and $n=5$. The transmitted ciphertext matrix is
$$
C^{*} = \begin{pmatrix}
16046 & 4513 & 7211 & 1337\\
14955 & 7767 & 4025 & 2087\\
16387 & 8510 & 4413 & 2282\\
15969 & 8292 & 4297 & 2226
\end{pmatrix}
$$
There are three errors in the top row, but $c_{11}$ is known to be transmitted correctly. The rough estimates for the remaining entries are $c_{12}\approx\frac{c_{11}}{\tau}\approx8326$, $c_{13}\approx\frac{c_{11}}{\tau^2}\approx4321$, and $c_{14}\approx\frac{c_{11}}{\tau^3}\approx2239$. 

The checking ranges are calculated by taking maximal and minimal ratios in the corresponding columns of $M_5$. 
\begin{align*}
8299.66\approx16046\cdot\frac{29}{15}\leq c_{12}\leq16046\cdot\frac{15}{8}\approx8557.87\\
4278.93\approx16046\cdot\frac{15}{4}\leq c_{13}\leq16046\cdot\frac{29}{8}\approx4426.48 \\
2139.47\approx16046\cdot\frac{15}{2}\leq c_{14}\leq16046\cdot\frac{56}{8}\approx2292.29
\end{align*}
The ranges are wide because $n$ is small, but they shrink to containing a single integer for $n\geq34$.
\end{example}
Additional check data can also be transmitted to avoid or narrow down trial and error searches. In particular, one can use  determinants to correct single errors by solving a linear equation, or double errors by solving Diophantine equations within the range of solutions narrowed down by the checking relations as in \cite{KS17}. However, this requires one to assume that all other entries, and the determinant itself, are correctly transmitted. Considering also the inefficiency of calculations with determinants in higher dimensions, it seems more reasonable to transmit row sums of $C$ instead as additional check data. For example, this would allow correcting a single error by solving a linear equation  {\it without} assuming that all other rows are correctly transmitted.

\section{Generation of the coding matrices}\label{GenEnc}

In this section, we review some approaches to generating coding matrices that meet the conditions of Theorems \ref{kDCheckRel} and \ref{RangeShrink}, and hence are suitable for encryption with error detection and correction.

\subsection{Companion matrices}

Generation is quite straightforward if one is content with using symmetric coding matrices. When $L$ is a left companion matrix \eqref{Ltransf} then, by direct computation, its eigenvectors are of the form $\begin{pmatrix}\l^{k-1} & \dots & \l & 1\end{pmatrix}^T$, where $\l$ is the eigenvalue. Therefore, if the characteristic polynomial of $L$ has a simple positive dominant root $\tau$ the dominant eigenvector will automatically have strictly positive entries, i.e. $L$ will be strong Perron-Frobenius. Generation of such polynomials with integer coefficients is discussed in \cite{DS15}. Writing the polynomial as $f(z):=z^k-a_{k-1}z^{k-1}-\dots-a_1z-a_0$, one can simply pick $a_i$ randomly from a range $[-M,N]$ and select those that have small positive dominant roots, say $\tau<3$. The Pisot condition can be used to further restrict selection.

Invertibility is ensured by restricting to $a_0\neq0$. Moreover, left companion matrices always have a cyclic vector, namely $\begin{pmatrix}1 & 0& \dots & 0\end{pmatrix}^T$, and, in fact, almost every vector is cyclic. Finding cyclic vectors deterministically is discussed in \cite{AC97}.
\begin{example}\label{Comp1121} For this example we generated polynomials with $a_i$ selected from $[0,3]$. Skipping those with $0$-$1$ coefficients we took $$f(z) = z^4 -z^3 -z^2 -2z -1.$$ Its dominant root is $\tau\approx2.066$ and the second largest eigenmodulus is $\sigma\approx0.9582$, i.e. this polynomial is Pisot. Taking the standard initial vector $x_0 = \begin{pmatrix}
1 & 0 & 0 & 0 \end{pmatrix}^T$ we generate 
$$L= \begin{pmatrix}
1 & 1 & 2 & 1\\
1 & 0 & 0 & 0\\
0 & 1 & 0 & 0\\
0 & 0 & 1 & 0
\end{pmatrix}\ \text{and}\ M_0 = \begin{pmatrix}
5 & 2 & 1 & 1\\
2 & 1 & 1 & 0\\
1 & 1 & 0 & 0\\
1 & 0 & 0 & 0
\end{pmatrix}\!.
$$ 
The last component of the key, positive integer $n$, can also be generated randomly from a large range $[M,N]$. When it is selected the rows of $M_0$ provide initial values to generate  the sequences $X_m$ recursively up to $m=n+6$. In this case, we calculate for $n=10$\,:
\begin{align*}
&X_m=0,0,0,1,1,2,5,10,20,42,87,179,370,765,1580,3264,6744,\dots;\\
&M_{10} = \begin{pmatrix}
6744 & 3264 & 1580 & 765\\
3264 & 1580 & 765 & 370\\
1580 & 765 & 370 & 179\\
765 & 370 & 179 & 87
\end{pmatrix}\!.
\end{align*}
\end{example}

\subsection{Pisot polynomials}

If one wishes to use only Pisot transition matrices there are more selective methods to generate them without sieving through general polynomials. For example, there is a complete description of Pisot polynomials with integer coefficients and $1<\tau<2$ due to Amara, Boyd and Talmoudi \cite{Boyd96,HS21}. The set of limit points of their dominant roots, called Pisot numbers, consists of two infinite series $\phi_r$ and $\psi_r$ and one exceptional value $\chi$. The Pisot polynomials for $\psi_r$ are none other than the characteristic polynomials  $\Psi_r(z)=z^{r+1}-z^{r}-\dots-z-1$ of the ($r+1$)-bonacci recurrences. Pisot numbers in a neighborhood of $\psi_r$ are roots of $z^m\,\Psi_r(z)\pm(z^{r+1}-1)$ or $z^m\,\Psi_r(z)\pm\frac{z^r-1}{z-1}$ (for small $m$ we may have $\tau>2$), with these polynomials either being themselves Pisot or becoming Pisot after dividing out some trivial factors. There are similar descriptions for Pisot numbers associated with $\phi_r$ and $\chi$. The (finitely many) irregular Pisot numbers not covered are also known.
\begin{example} Consider polynomials  $f_m(z)=z^m\,\Psi_2(z)-(z^3-1)$ associated with the Tribonacci recurrence. They are Pisot if $m$ is odd, and $\frac{f_m(z)}{z+1}$ are if $m$ is even \cite{Boyd96}. We skip $m=1,2$, where $\tau>2$, and for $m=3,4$ find:
\begin{align*}
f_3(z)&=z^6-z^5-z^4-2z^{3}+1;\ \ \ \tau\approx1.98139,\sigma\approx0.94792\\
\frac{f_4(z)}{z+1}&=z^6-2z^5+z^4-2z^{3}+z^2-z+1;\ \ \ \tau\approx1.91616,\sigma\approx0.93460
\end{align*}
Note that some $a_i<0$, including the free term. Forming the left companion matrices and selecting some cyclic vector $x_0$ one can calculate the remaining key data as in Example \ref{Comp1121}.
\end{example}

\subsection{Primitive matrices}

Recall that a matrix with non-negative entries is called {\it primitive} when some power of it has strictly positive entries. By a theorem of Perron, all primitive matrices have the strong Perron-Frobenius property \cite[9.4]{Lanc}. One can see that the primitivity only depends on where $0$ entries are located in the matrix. Hence, to test for the primitivity it is sufficient to inspect $0$-$1$ matrices. Those, in turn, can be associated to directed graphs (with self-loops) as their adjacency matrices.
\begin{definition}\label{StrCon}
A directed graph is called strongly connected if there is a directed path in it from any vertex to any other vertex, and it is called acyclic if the greatest common divisor of the lengths of directed cycles in it is $1$.
\end{definition}
\noindent A $0$-$1$ matrix is primitive if and only if its directed graph is strongly connected and acyclic \cite[2.4]{Varg}. Thus, the primitivity of a matrix can be checked quite easily. In particular, a $k\times k$ left companion matrix $L$ with non-negative coefficients $a_i\geq0$ is primitive if and only if $a_0 > 0$ and $a_{i}>0$ for some $i>0$ relatively prime to $k$. Directed graphs of some left companion matrices are shown on Figure \ref{3x3Graphs}. All of them are strongly connected, and all but c) are acyclic.
\begin{figure}[!ht]
\begin{centering}
a)\includegraphics[width=2.8cm]{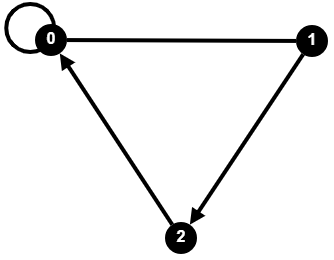}  b)\includegraphics[width=2.8cm]{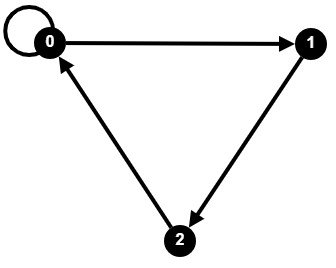}
c)\includegraphics[width=2.8cm]{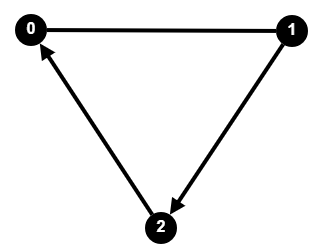}
d)\includegraphics[width=2.4cm]{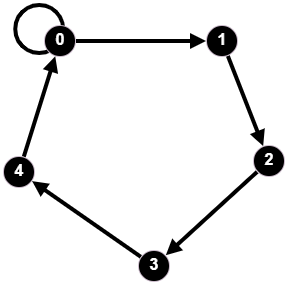}
\par\end{centering}
\hspace*{-0.1in}\caption{\label{3x3Graphs} Directed graphs of left companion matrices of a) Tribonacci; b) $2$-Fibonacci; c) recurrence with  $a_0=a_1 = 1$, $a_2 = 0$; d) $4$-Fibonacci.}
\end{figure}

Primitive $0$-$1$ matrices can be used as seeds to generate primitive matrices with any non-negative entries because replacing entries by larger entries preserves the primitivity. It also increases $\tau$, which can serve as a limiting factor in a generation algorithm. Since $\tau$ is bounded above by the matrix's maximal row sum and by its maximal column sum \cite[2.1]{Minc}, generation should favor sparse matrices with many zeros and low non-zero entries. Imposing the Pisot condition narrows the selection further. 
\begin{example}\label{WielSeedL} Starting with the order $3$ Wielandt left transition matrix $\begin{pmatrix}
0  & 1 & 1\\
1  & 0 & 0\\
0  & 1 & 0
\end{pmatrix}$ we increase some of the $0$ entries in it, and check for the size of $\tau$. One of the non-companion matrices so generated is $L=\begin{pmatrix}
0  & 1 & 1\\
1  & 2 & 1\\
0  & 1 & 0
\end{pmatrix}$. Although $\tau\approx2.831>2$, it is not too large, and in return this matrix is Pisot with a fairly small second eigenmodulus $\sigma\approx0.594$. The initial vector $\begin{pmatrix}0 & 0& 1\end{pmatrix}^T$ is cyclic for $L$, and $M_0=\begin{pmatrix}
1  & 1 & 0\\
3  & 1 & 0\\
1  & 0 & 1
\end{pmatrix}$. The characteristic polynomial of $L$ is 
$f(z) = z^3-2z^2-2z-1$, which means that the recurrence is $X_{n+3}=2X_{n+2}+2X_{n+1}+X_n$.
Since $L$ is not a left companion matrix the rows of $M_n$ are formed by three (potentially) different sequences satisfying this recurrence with the initial values given by the rows of $M_0$, see \eqref{GenMn}. Generating them recursively up to the $6$-th term we compute:
\begin{align*}
&X_m^{(2)}=0,1,1,4,11,31,88,249,705,\dots\\ 
&X_m^{(1)}=0,1,3,8,23,65,184,521,1475,\dots\\
&X_m^{(0)}=1,0,1,3,8,23,65,184,521,\dots;\\
&M_{6}=\begin{pmatrix}
705 & 249 & 88\\
1475 & 521 & 184\\
521 & 184 & 65
\end{pmatrix}\!.
\end{align*}
\end{example}

\subsection{Right companion representation}

The methods discussed above generated $M_n$ from left transition matrices. However, one can use the alternative representation described in Section \ref{RiteComp} instead, and start from a right companion matrix $R$, and an initial matrix $M_0$. The benefit is that $M_0$ does not have to be computed from a cyclic vector and can be generated randomly. The approach is based on the following lemma. 
\begin{lemma}\label{Rgen}
Let $R$ be an invertible right companion matrix of the form \eqref{Rtransf} with the strong Perron-Frobenius property. Let $M_0$ be any invertible matrix with non-negative entries and set $M_n:=M_0R^n$. Then the left transition matrix of $M_n$ also has the strong Perron-Frobenius property.
\end{lemma}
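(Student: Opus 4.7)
The plan is to exploit the similarity $L = M_0 R M_0^{-1}$ and transport the Perron-Frobenius data from $R$ to $L$ by pushing the dominant eigenvector of $R$ through $M_0$.

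First I would identify $L$ explicitly. Since $M_n = M_0 R^n$ and $M_{n+1} = L M_n$, invertibility of $M_0$ forces $L = M_0 R M_0^{-1}$ (the same construction appears at the start of Section \ref{RiteComp}). Thus $L$ and $R$ are similar, so they share their characteristic polynomial. In particular, the simple positive dominant eigenvalue $\tau$ of $R$ is also a simple positive dominant eigenvalue of $L$; the only remaining question is whether the corresponding eigenvector of $L$ has strictly positive entries.

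Next, let $v$ be the Perron-Frobenius eigenvector of $R$, which by assumption satisfies $v > 0$ entrywise and $Rv = \tau v$. A direct computation gives
\begin{equation*}
L(M_0 v) = M_0 R M_0^{-1} M_0 v = M_0 R v = \tau (M_0 v),
\end{equation*}
so $M_0 v$ is an eigenvector of $L$ for $\tau$. It therefore suffices to show that $M_0 v$ has strictly positive entries.

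The crucial observation is that every row of $M_0$ contains at least one positive entry: otherwise a zero row would render $M_0$ singular, contradicting the hypothesis. Since $M_0 \geq 0$ entrywise and $v > 0$ entrywise, each coordinate $(M_0 v)_i = \sum_j (M_0)_{ij} v_j$ is a sum of non-negative terms with at least one strictly positive summand, hence $(M_0 v)_i > 0$ for every $i$. Combined with the previous paragraph, this shows that $L$ has a simple dominant eigenvalue $\tau > 0$ with a strictly positive eigenvector $M_0 v$, i.e.\ $L$ satisfies the strong Perron-Frobenius property. I do not anticipate a genuine obstacle here; the only point that requires care is the entrywise positivity of $M_0 v$, which hinges on the elementary but essential fact that an invertible non-negative matrix has no zero row.
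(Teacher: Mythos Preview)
Your proof is correct and follows essentially the same route as the paper: set $L=M_0RM_0^{-1}$, use similarity to transfer the simple dominant eigenvalue, and push the Perron eigenvector through $M_0$, noting that an invertible non-negative matrix has no zero row so $M_0v>0$.
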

\begin{proof} By construction, $M_n$ are invertible. Set $L:=M_0RM_0^{-1}$, then $L$ is similar to $R$ and has the same eigenvalues with the same multiplicities. In particular, it has the same simple dominant eigenvalue $\tau$ as $R$. We just have to show that the corresponding eigenvector has strictly positive entries. Let $\xi$ be the dominant eigenvector of $R$. Then
$LM_0\xi=M_0R\xi=\tau M_0\xi$, so $M_0\xi$ is the dominant eigenvector of $L$. Since $\xi$ has strictly positive entries and $M_0$ has non-negative entries by assumption, $M_0\xi$ also has non-negative entries. Moreover, they can be $0$ only if $M_0$ has a row of all $0$-s, which contradicts its invertibility. 
\end{proof}
In the following example we use the transpose of a left companion matrix as both $R$ and $M_0$.
\begin{example}\label{R102}Take $R=\begin{pmatrix}
1  & 1 & 0\\
0  & 0 & 1\\
2  & 0 & 0
\end{pmatrix}$ and $M_0:=R$. It follows that $L=R$ and $M_n:=R^{n+1}$. Note that $L$ is not a left companion matrix. The transition ratio is $\tau \approx 1.6956$, and the corresponding eigenvector (computed with Maple) is approximately $\begin{pmatrix}0.84781&0.58975&1\end{pmatrix}^T$. The coding matrices can be generated as in Example \ref{WielSeedL}.
\end{example}

\section{Conclusions}

We introduced a generalization of golden cryptography to higher order linear recurrences, and gave explicit conditions for preserving its error correction properties. We also described a number of ways to randomly generate coding matrices that satisfy these conditions, and algorithms for error detection and correction. 

In the simpler case the coding matrices $M_n$ are symmetric and filled with consecutive entries of a single recurrent sequence. The encryption key consists, in addition to the index $n$, of $k$ coefficients of the recurrence and $k$ initial values, where $k$ is the order of the recurrence. In the most general case, the rows of $M_n$ are segments of different sequences, albeit satisfying the same recurrence, and $k$ initial values are replaced by $k^2$ ones, $k$ for each row sequence. Additional parameters greatly improve the security of encryption, which is no longer vulnerable to the known types of brute force and chosen plaintext attacks. The tradeoff is the increased computational burden, especially when $k$ and $n$ are large. 

Although our coding matrices can be represented in the form $M_n=L^nM_0$, their special recurrent structure ensures that matrix multiplication is not needed to compute them. This makes the encryption/decryption procedures computationally attractive, especially for large $k$ and $n$. At the same time, our main results show that  certain spectral properties of $L$ play a central role in determining which coding matrix sequences are feasible in practice. First, there is a tradeoff between the size of the spectral radius $\tau$ of $L$ and the size of indices $n$ that can be used without producing intractably large ciphertexts. Second, $L$ must have the strong Perron-Frobenius property to induce the checking relations in the ciphertext leveraged by the error detection and correction algorithms. Finally, an even stronger Pisot condition on $L$ is needed to minimize trial and error in those algorithms, and utilize encryption redundancy most efficiently. 

These restrictions, especially the last one, somewhat reduce the variety of feasible coding matrices compared to merely formal considerations. In particular, if we bound the sizes of both $\tau$ and $k$ then there are only finitely many integer matrices $L$ that are strong Perron-Frobenius or Pisot. However, their number grows quickly with $k$, and the additional freedom of choosing the initial values still provides abundant means for ensuring encryption security.

Finally, we presented a number of methods for generating feasible coding matrices. Most of them favor cases where $L$ has non-negative entries. However, the companion matrices of Pisot polynomials often have negative entries, and finding systematic ways of generating strong Perron-Frobenius and Pisot matrices of more general form with some negative entries is desirable. Especially those that allow controlling the sizes of the spectral radius and of the second largest eigenmodulus.

One may also wish to explore coding matrices of the form $M_n=L^nM_0$, where no relation between $L$ and $M_0$ is assumed (in our setting $M_0$ is always generated by a single cyclic vector of $L$). In that case, the entries of $M_n$ may no longer be terms of recurrent sequences, and hence may be harder to compute. It is also unclear how one can establish the checking relations in this generality. However, in some special cases, at least, a version of them may still hold. For example, in \cite{BD14} the right companion matrix of the Tribonacci sequence is used as both $L$ and $M_0$, and the row ratios of the ciphertext matrices are shown to approach some rational functions of $\tau$. It is of interest to characterize classes of pairs $L$, $M_0$ that admit such generalizations of the checking relations.

\end{document}